\newcommand{\vast}{\bBigg@{4}}
\newcommand{\Vast}{\bBigg@{5}}
\def\BState{\State\hskip-\ALG@thistlm}
\newtheorem{lemma}{Lemma}
\newcommand{\subparagraph}{}
\newtheorem{theorem}{Theorem}
\newtheorem{corollary}{Corollary}
\def\ScaleIfNeeded{%
	\ifdim\Gin@nat@width>\linewidth \linewidth \else \Gin@nat@width
	\fi } \makeatother
\def\underbracex#1#2{\mathop{\vtop{\m@th\ialign{##\crcr
				$\hfil\displaystyle{#2}\hfil$\crcr
				\noalign{\kern3\p@\nointerlineskip}%
				#1\crcr\noalign{\kern3\p@}}}}\limits}
\def\underbracea{\underbracex\upbracefilla}
\def\upbracefilla{$\m@th \setbox\z@\hbox{$\braceld$}%
	\bracelu\leaders\vrule \@height\ht\z@ \@depth\z@\hfill 
	\kern\p@\vrule \@width\p@\kern\p@\vrule \@width\p@\kern\p@\vrule \@width\p@
	$}
\def\upbracefillb{$\m@th \setbox\z@\hbox{$\braceld$}%
	\vrule \@width\p@\kern\p@\vrule \@width\p@\kern\p@\vrule \@width\p@\kern\p@
	\leaders\vrule \@height\ht\z@ \@depth\z@\hfill\bracerd
	\braceld\leaders\vrule \@height\ht\z@ \@depth\z@\hfill
	\kern\p@\vrule \@width\p@\kern\p@\vrule \@width\p@\kern\p@\vrule \@width\p@
	$}
\def\underbracec{\underbracex\upbracefillc}
\def\upbracefillc{$\m@th \setbox\z@\hbox{$\braceld$}%
	\vrule \@width\p@\kern\p@\vrule \@width\p@\kern\p@\vrule \@width\p@\kern\p@
	\leaders\vrule \@height\ht\z@ \@depth\z@\hfill
	\kern\p@\vrule \@width\p@\kern\p@\vrule \@width\p@\kern\p@\vrule \@width\p@
	$}
\def\underbraced{\underbracex\upbracefilld}
\def\upbracefilld{$\m@th \setbox\z@\hbox{$\braceld$}%
	\vrule \@width\p@\kern\p@\vrule \@width\p@\kern\p@\vrule \@width\p@\kern\p@
	\leaders\vrule \@height\ht\z@ \@depth\z@\hfill\braceru$}
\def\upbracefillbd{$\m@th \setbox\z@\hbox{$\braceld$}%
	\vrule \@width\p@\kern\p@\vrule \@width\p@\kern\p@\vrule \@width\p@\kern\p@
	\bracerd\braceld
	\leaders\vrule \@height\ht\z@ \@depth\z@\hfill\braceru$}
\begin{document}
	
	%
	\title{Analyzing Novel Grant-Based and Grant-Free Access Schemes for Small Data Transmission}
	\author{Hui~Zhou,~\IEEEmembership{Student Member,~IEEE,}
		Yansha~Deng,~\IEEEmembership{Member,~IEEE,}
		Luca~Feltrin,
		Andreas~Höglund.
		
		\vspace{-0.2cm}
		\thanks{This work was supported by Engineering and Physical Sciences Research Council (EPSRC), U.K., under Grant EP/W004348/1. (Corresponding author: Yansha Deng)}
		\thanks{
			H. Zhou and Y. Deng are with  Department of Engineering, King's College London, London, WC2R 2LS, UK (email:\{hui.zhou, yansha.deng\}@kcl.ac.uk).}
		\thanks{L. Feltrin and A. Höglund are with the Ericsson AB (email:\{luca.feltrin, andreas.hoglund\}@ericsson.com).}
		
	}
	\maketitle

	\begin{abstract}

		Fifth Generation (5G) New Radio (NR) does not support data transmission during random access (RA) procedures, which results in unnecessary control signalling overhead and power consumption, especially for small data transmission (SDT). Motivated by this, 3GPP has proposed 4/2-step SDT RA schemes based on the existing grant-based (4-step) and grant-free (2-step) RA schemes, with the aim to enable data transmission during RA procedures in Radio Resource Control (RRC) Inactive state. To compare the 4/2-step SDT RA schemes with the benchmark 4/2-step RA schemes, we provide a spatio-temporal analytical framework to evaluate the RA schemes, which jointly models the preamble detection, Physical Uplink Shared Channel (PUSCH) decoding, and data transmission procedures. Based on this analytical model, we derive the analytical expressions for the overall packet transmission success probability and average throughput in each RACH attempt. We also derive the average energy consumption in each RACH attempt. Our results show that 2-step SDT RA scheme provides the highest overall packet transmission success probability, and the lowest average energy consumption, but the performance gain decreases with the increase of device intensity.
		
	\end{abstract}
	
	
	\begin{IEEEkeywords}
		Grant-based, Grant-free, 4-step, 2-step, Small data, and Energy consumption.
	\end{IEEEkeywords}

	%
	\maketitle

	\section{Introduction}
	As an emerging technology, the Internet of Things (IoT) enables physical objects (e.g., sensors) to be connected to the Internet, which has been implemented through massive Machine Type Communications (mMTC), and Ultra Reliable Low Latency Communications (URLLC) services in the Fifth Generation (5G) New Radio (NR). The mMTC provides ubiquitous connections, and URLLC guarantees stringent constraints of reliability and latency\cite{NR-requirements}. A plethora of applications, including unmanned aerial vehicle (UAV), wearable devices, industrial wireless sensor networks (IWSN), smart meters, and etc, are being revolutionized via IoT, in which small data packets are the typical form of traffic generated by IoT devices (e.g., hundreds of bits)\cite{Wang2019}. In view of this, minimizing control signalling overhead becomes a critical issue during small data transmission (SDT) due to the following two reasons: 1) the control signalling is non-negligible compared to the small data packets\cite{Durisi2016}; 2) non-negligible control signalling results in unnecessary energy consumption and degrades the battery performance of IoT devices, which is considered to be the main problem for IoT devices with limited battery capacity\cite{Han2019}. 
	
	\begin{figure}[htbp!]
		\centering
		\includegraphics[scale=0.8]{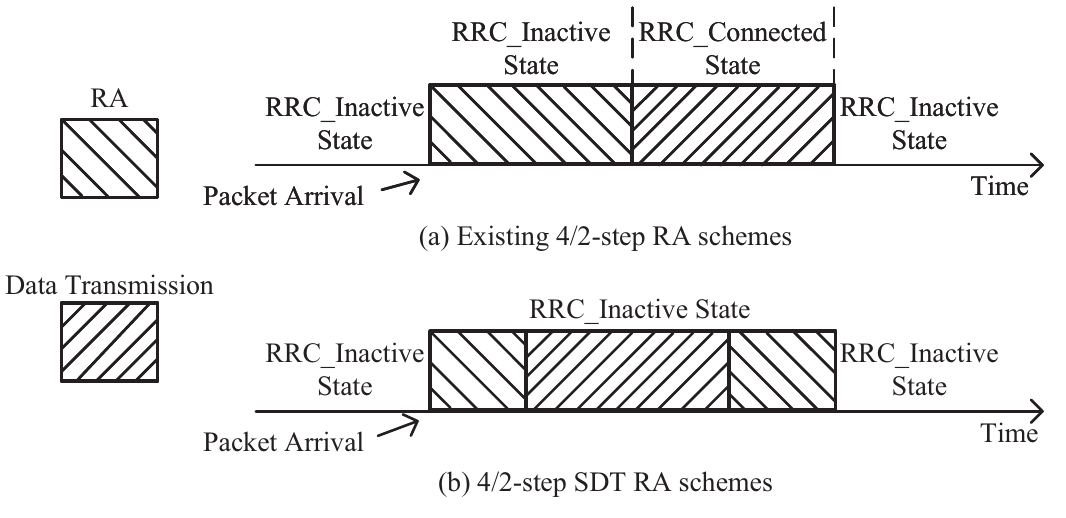}
		\caption{Uplink transmission of (a) existing 4/2-step RA schemes; and (b) newly proposed 4/2-step SDT RA schemes.}
		\label{fig:RRC State Transition}
	\end{figure}
	
	The IoT device is triggered to perform random access (RA) when a new packet arrives, and transits to Radio Resource Control (RRC) Connected state for data transmission in existing 4-step and 2-step RA schemes as shown in Fig.~\ref{fig:RRC State Transition}(a). The 4-step RA scheme utilizes a scheduling-based transmission mechanism, namely grant-based (GB) access, in which the Physical Uplink Shared Channel (PUSCH) resource is allocated to the device only after the base station (BS) receives the preamble\cite{Jang2017}. The 2-step RA scheme, introduced in 3GPP Release 16, was initially developed for the unlicensed spectrum to reduce the number of Listen Before Talk (LBT) to perform, and it also reduces the latency for RA procedures\cite{3gpp-2step}. The 2-step RA scheme is categorized as grant-free (GF) access, where the PUSCH resource is pre-allocated and transmitted along with the preamble. 
	
	RRC state transitions in the existing 4/2-step RA schemes result in unnecessary control signalling overhead, especially for SDT \cite{Hailu2019,3GPP2018}. To optimize the support for SDT, 3GPP has identified the data transmission within 4/2-step RA schemes as the potential solutions for NR in Release 17 Work Item \cite{SDT_WID}, namely 4-step SDT and 2-step SDT RA schemes, where the idea of transmitting data during RA procedure was first specified for Narrowband Internet of Things (NB-IoT) and Long Term Evolution for Machine Type Communication (LTE-M) in 3GPP Release 15. Compared to the existing 4/2 step RA schemes, the 4/2-step SDT RA schemes, as shown in Fig.~\ref{fig:RRC State Transition}(b), have the potential capability to reduce the controlling overheads and energy consumption via enabling data transmission during RRC Inactive State.
	
Recent works in \cite{Thota2019, nokia-eval, Ericsson_eval, Hoglund2018, Khlass2021} have evaluated the 4-step, 2-step, 4-step SDT, and 2-step SDT RA schemes via system-level simulation. In \cite{Thota2019}, the authors proposed three RA enhancements based on the 2-step RA scheme to reduce the collision probability, which include Parallel Preamble Transmissions, Enhanced Back-off, and Dynamic Reserved Preambles. In \cite{nokia-eval}, the preamble detection and PUSCH resource decoding was analyzed via link level simulation for different control plane payload sizes, where different Demodulation Reference Signal (DMRS) sequences were applied to handle potential PUSCH resource unit (PRU) collision. In \cite{Ericsson_eval}, considering that the GF scheme is limited to small cell size due to unsynchronized uplink, the coverage extension of 2-step RA scheme with different PUSCH size transmissions was exploited via system level simulation. In \cite{Hoglund2018}, the authors evaluated the battery life and latency of the 4-step SDT scheme, where three traffic models with various payload sizes and periods were considered. In \cite{Khlass2021}, the authors compared the performance of 4/2-step SDT RA schemes with the traditional 4/2-step RA schemes under different mobility patter, traffic pattern, and packet size. However, the general RA model and mathematical framework for the 4-step, 2-step, 4-step SDT, and 2-step SDT RA schemes have never been fully established, and their comparative insights have not been investigated yet.

	To characterize and analyze the performance of RA schemes, stochastic geometry has been regarded as a powerful tool to capture the uncertainty of devices' locations in wireless networks\cite{Haenggi2012}, which has been utilized to analyse the 4/2-step RA schemes \cite{Gharbieh2017,Jiang2018,Jiang2018a,Jiang2018b, Gharbieh2018,Liu2020}. In \cite{Gharbieh2017}, the authors analyzed the Signal to Noise plus Interference Ratio (SINR) outage of the 4-step RA scheme with three different preamble transmission schemes, where the mutual interference was considered. In \cite{Jiang2018}, the authors analyzed the queue evolution of the 4-step RA scheme by developing a spatio-temporal mathematical framework, in which preamble transmission success probability was derived. The work in \cite{Jiang2018a} extended the preamble transmission probability analysis to the preamble collision of the 4-step RA scheme, where only the device succeeds in collision will receive the granted resource for data transmission. The work in \cite{Jiang2018b} modelled the 4-step RA scheme in the NB-IoT network under time-correlated interference by considering the repeated preamble transmission and collision. The work in \cite{Gharbieh2018} analyzed both 4/2-step RA schemes based on stochastic geometry and queueing theory, where the device with the largest SINR succeeds in the collision. The work in \cite{Liu2020} defined the latent access failure probability of the 2-step RA scheme in URLLC service, and proposed a tractable approach to analyze the 2-step RA scheme under three different hybrid automatic repeat request (HARQ) mechanisms, including Reactive, K-repetition, and Proactive schemes. However, to the best of our knowledge, existing works have either focused on studying preamble SINR outage \cite{Gharbieh2017,Jiang2018} without considering collision, or assumed the collision happens during preamble transmission for simplicity\cite{Gharbieh2018,Jiang2018a,Jiang2018b,Liu2020} and data transmission was ignored, whose model can not capture accurate average energy consumption and  average throughput during RA procedures.
	
	Due to the lack of theoretical analysis and comparison among 4/2-step and 4/2-step SDT RA schemes in existing works, we aim to address the following fundamental questions: 1) how to fairly model the 4/2-step, and 4/2-step SDT RA schemes with data transmission; 2) how to capture the overall packet transmission success probability, average throughput and energy consumption of each scheme; 3) which scheme performs better in a specific device density scenario. To do so, we develop a novel spatio-temporal mathematical framework to analyse the overall packet transmission success probability, average energy consumption, and average throughput under each RA scheme. The main contributions of this paper can be summarized in the following points:
	
	\begin{itemize}
		\item We present a tractable spatio-temporal mathematical framework to analyze 4/2-step, and 4/2-step SDT RA schemes based on stochastic geometry and probability theory, in which the devices are modelled as independent Poisson point process (PPP) in the spatial domain, and the new arrival packets of each device are modelled by independent Poisson arrival process in the time domain.
		\item  We jointly model preamble detection, PUSCH decoding, and data transmission procedures, which are general and can be extended to analyze different RA schemes. We derive the exact expressions for the preamble detection probability and the PUSCH decoding probabilities taking into account SINR outage and collision, and data transmission success probability. We also derive the overall packet transmission success probability, average throughput, and average energy consumption in each RACH attempt.

		\item We develop a realistic simulation framework to capture the preamble detection, PUSCH decoding, data transmission, where overall packet transmission success probability, average throughput, and average energy consumption for packet transmission are verified. Our results show that 2-step SDT RA scheme provides the highest overall packet transmission success probability, and lowest energy consumption. However, the performance gain compared with other RA schemes decreases with the increase of device intensity.
		
	\end{itemize}
	The rest of the paper is organized as follows. Section II presents the system model. Sections III derives the transmission success probability of each message, overall packet transmission success probability in each RACH attempt, and average throughput. Section IV derives the average energy consumption based on transmission status. Section V provides numerical results. Finally, Section VI concludes the paper.

	\section{System Model}
	We consider uplink transmission for the cellular network consisting of a single BS and multiple IoT devices\footnote{The Physical Random Access Channel (PRACH) root sequence planning is applied to mitigate the false alarm ratio of preamble detection among neighbouring BSs, thus neighbouring BSs have different preamble sets\cite{Ahmadi2019}. Therefore, we focus on studying the mathematical framework for Random Access Channel (RACH) in a single cell.\label{root}}, which are spatially distributed in $ \mathbb{R}^2 $ following
	independent homogeneous PPP ${\mathrm{\Phi_D}}$ with intensity $\mathrm{\lambda_D}$, and are assumed to be static all time once they are deployed.

	\subsection{Network and Traffic Model}
	
	We consider a flat Rayleigh fading channel, where the channel  between two generic locations $ x, y \in \mathbb{R}^2 $ is assumed to follow $ h(x,y) \sim \mathcal{CN}(0, 1)$. All the channel gains are independent of each other, independent of the spatial locations, and identically distributed (i.i.d.). For the brevity of exposition, the spatial indices $ (x, y) $ are dropped. We consider a standard power-law path-loss model with attenuation $ r^{-\alpha} $, where $ r $ is the propagation distance from the device to BS, and $ \alpha $ is the path-loss exponent. An ideal full path-loss inversion power control is assumed at all devices to solve the ``near-far'' problem based on downlink transmission path-loss estimation, where each device compensates for its path-loss to keep the average received signal power at the BS equal to the same threshold $ \rho $. We also assume that none of the devices suffer from truncation outage, which means the maximum transmit power of the device is large enough to compensate for uplink path-loss\cite{Jiang2018}.
	
	We model the new arrived packets $ N_\mathrm{new}^{m} $ in the $ m $th RACH attempt at each device using independent Poisson arrival process $ \Lambda_\mathrm{new}^{m} $ with the intensity $ \mu_\mathrm{new}^{m} $. The packets of each device line in a queue to be transmitted, and are determined by the newly arrived and undelivered packets. First Come First Serve (FCFS) scheduling scheme is applied by placing the newly arrived packets at the end of the queue. Without loss of generality, we assume the buffer size of each device is large enough, and an infinite amount of RACH attempts is assumed, where no packet is dropped until the packet is successfully received by the BS.

	\subsection{Contention-Based Random Access Schemes}

	We present the detailed procedures of the existing 4/2 RA schemes, and the newly proposed 4/2 SDT RA schemes in this section, respectively.

	\subsubsection{4-step and 4-step SDT Random Access}
	The 4-step RA scheme is shown in Fig.~\ref{fig:Procedures of each scheme}(a). In step 1, each device randomly selects a preamble generated by Zadoff-chu (ZC) sequence cyclic shift, and transmits as Msg1 on the RA subframe, which implicitly specifies the RA-radio network temporary identifier (RA-RNTI). In step 2, the BS responds to the device with Msg2 (i.e., random access response (RAR)) containing a temporary cell RNTI (TC-RNTI), timing advance(TA), and PUSCH resource granted for Msg3 transmission under the condition that BS successfully detects the preamble. If not, the device reattempts in the next RACH opportunity. In step 3, the device transmits Msg3 (i.e., RRC connection resume request) including a device identity. If multiple devices select the same preamble and RA subframe in Msg1, they receive the same Msg2, and transmit their own Msg3 on the same PUSCH resource resulting in a collision. If the BS successfully decodes one specific Msg3 among colliding devices, in step 4, the BS sends Msg4 (i.e., RRC connection resume) with an echo of the identity transmitted in Msg3 by the device. Those devices with matched identity succeed in the RA procedure and enter into the RRC Connected state for data transmission with HARQ, and all failed devices have to reattempt in the next RACH opportunity. After successful data transmission, the device receives the RRC release with suspend from the BS, and goes back to the RRC Inactive state. We only model data HARQ to make the comparison results more intuitive, and HARQ for both Msg3 and Msg4 is an easy extension.
	
	\begin{figure}[htbp!]
		\centering
		\includegraphics[scale=0.8]{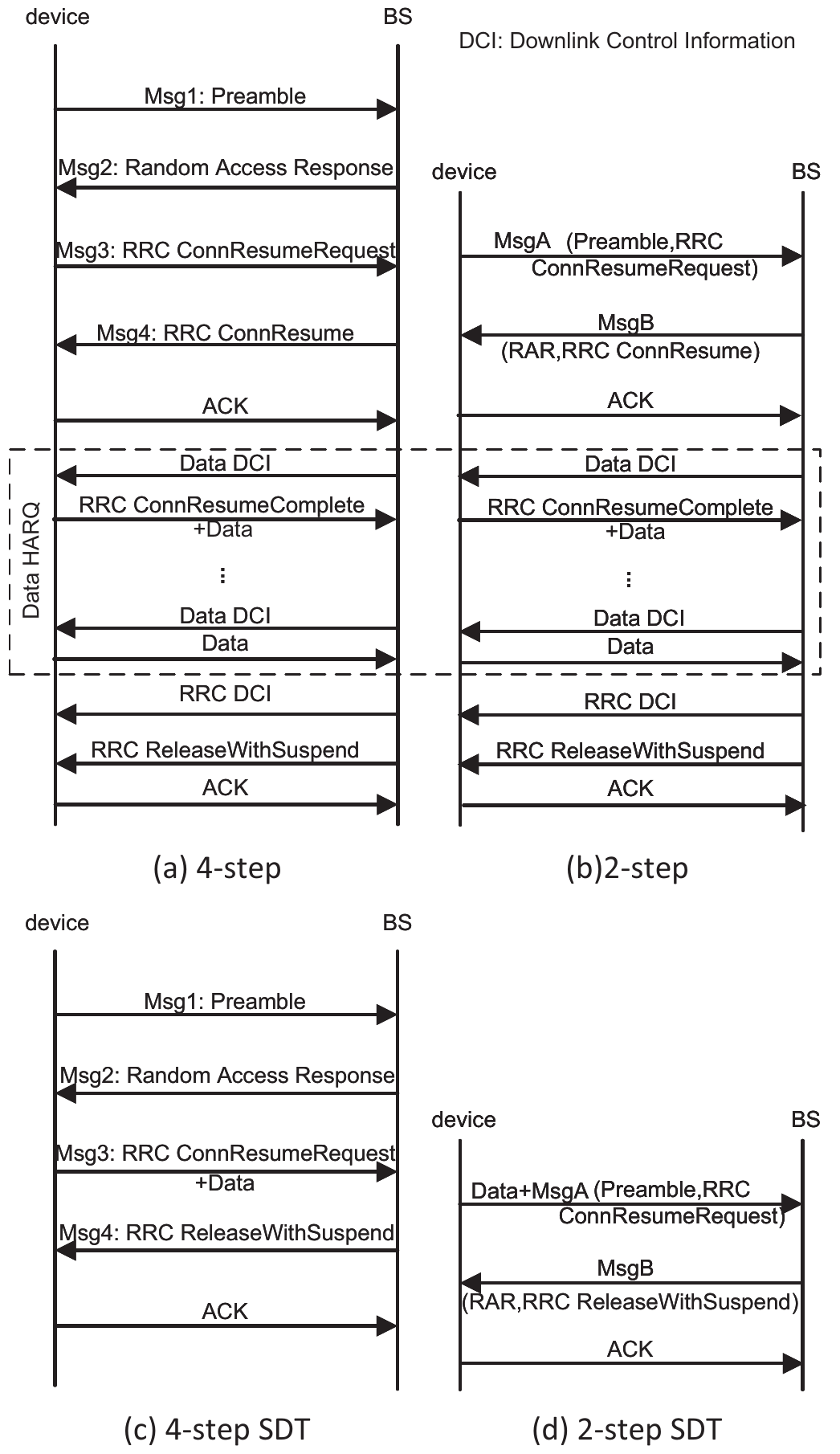}
		\caption{Procedures of each scheme.}
		\label{fig:Procedures of each scheme}
	\end{figure}

	The 4-step SDT RA scheme is illustrated in Fig.~\ref{fig:Procedures of each scheme}(c), which enables devices to transmit data with Msg3 without entering into RRC Connected state\cite{Hoglund2018,Thota2019}. Since the BS would have no knowledge whether the device has a small packet to transmit or not. Even if devices have deterministic traffic and the traffic pattern are predictable, the device identity is not known to the BS until Msg3. Therefore, the device is required to indicate its wish to use 4-step SDT RA scheme to the BS in Msg1 by randomly selecting one of the special PRACH preambles, which have been dedicated to 4-step SDT by the BS in system information. Once the preamble is detected successfully, the BS grants a larger PUSCH resource for both Msg3 and data transmission. Here, we assume that the packet size is smaller than the maximum transport block size (TBS), hence the data can be transmitted together with Msg3 in one slot. The device, whose Msg3 and data are successfully decoded among colliding devices, receives Msg4 (i.e., RRC release with suspend). Since the data is transmitted together with Msg3, no data HARQ needs to be considered in 4-step SDT RA scheme.
	
	\subsubsection{2-step and 2-step SDT Random Access}
	The 2-step and 2-step SDT RA schemes are illustrated in Fig.~\ref{fig:Procedures of each scheme}(b)(d), where data is either transmitted after MsgB in 2-step RA scheme in RRC Connected state, or together with MsgA in 2-step SDT RA scheme in RRC Inactive state, respectively. The MsgA consists of two parts, that is, preamble and PUSCH (corresponding to Msg3 in 4-step RA scheme), which are transmitted separately over time with independent channel. Unlike 4-step and 4-step SDT RA schemes, where the PUSCH resources are granted via Msg2, each preamble is mapped to a PUSCH in advance in 2-step and 2-step SDT RA schemes. If multiple preambles are mapped to a PUSCH in the same time-frequency resource, the PUSCH transmissions of the preambles overlap in time and frequency, which increases the probability of PUSCH decoding failure. Alternatively, a single preamble can be mapped to a unique PUSCH time-frequency resource, which reduces the probability of PUSCH decoding failure due to collision, but significantly increases the physical-layer overhead in the uplink. Here, we only consider the unique mapping relationship between preamble and PUSCH as the baseline. It is noted that if the preamble is detected but none of PUSCH transmission is decoded successfully among the colliding devices in 2-step and 2-step SDT RA schemes, the BS sends a fallback MsgB with an UL grant for Msg3 transmission, and all the colliding devices fallback to the 4-step RA scheme.

	\subsection{Random Access Model}
	To capture the characteristics of uplink transmission with different RA schemes, we jointly model the preamble detection, PUSCH decoding, and data transmission, which are based on ZC sequence characteristic, Power Delay Profile (PDP), SINR, and Block Error Rate (BLER), respectively.

	\subsubsection{Zadoff-chu Sequence Characteristic}
	 As we mentioned earlier, the preamble is generated from cyclic shift of the ZC sequence defined as 
	 	\begin{equation}
	 	z_{r}\left[ k\right]\triangleq \mathrm{exp}\left[ -j\pi rk(k+1)/N_{\mathrm{ZC}}\right],
	 	\end{equation}
	 	where $ k $ is the sequence index, $ N_{\mathrm{ZC}} $ denotes the sequence length, and $ r  $ is the root number broadcasted to devices in the system information, which is different among the neighbouring BSs\textsuperscript{\ref{root}}.
	
	The ZC sequences have an ideal cyclic auto-correlation property, which means the magnitude of the cyclic correlation with a circularly shifted version of itself becomes a scaled delta function as
		\begin{equation}
		\left| c_{rr}\left[  \tau \right] \right| =\left|\sum_{k=0}^{N_{\mathrm{ZC}}-1} z_{r}\left[ k\right] z_{r}^{*}\left[ k+\tau\right] \right|  =N_{\mathrm{ZC}}\delta\left[ \tau\right],
		\label{eq:preamble_corre}
		\end{equation}
		where  $ c_{rr}\left[  \tau \right] $ is the discrete cyclic auto-correlation function of $ z_{r}\left[ k\right] $ at lag $ \tau $ and $ [\cdot]^{*} $ denotes the complex conjugate. From this property, we can observe how much the received sequences are shifted, compared to the reference ZC sequence.

	In principle, multiple preambles can be generated from a ZC sequence by cyclically shifting the sequence by a factor of cyclic shift value $ \mathrm{N}_{\mathrm{CS}} $. Thus, the $ i $th preamble can be represented as  
		\begin{equation}
		z_{r}^{i}\left[ k\right]=z_{r}\left[ \left( k+i \mathrm{N}_{\mathrm{CS}}\right) \bmod  N_{\mathrm{ZC}} \right].
		\label{eq:reference preamble}
		\end{equation}

	\subsubsection{Power Delay Profile}
	
	PDP is utilized to model the preamble detection (i.e., Msg1 or MsgA) at the BS, which is the periodic correlation of the received preamble as a function of time. The BS detects preamble transmission via PDP, which indicates whether devices are requesting resources for uplink transmission. We assume that there is no interference and collision during preamble transmission by considering the BS having enough resolution to distinguish the propagation delays among any two or more users \cite{Jang2017}. In particular, in Fig.~\ref{fig:PDP}, we show that the PDP peak values of devices choosing the same preamble (UE1+UE2 at $\tau_{12}$ in the figure) are indeed separable when the BS is able to achieve a higher resolution (UE1 and UE2 at $\tau_1$ and $\tau_2$).
	
	\begin{figure}[htbp!]
	\centering
	\includegraphics[scale=1.5]{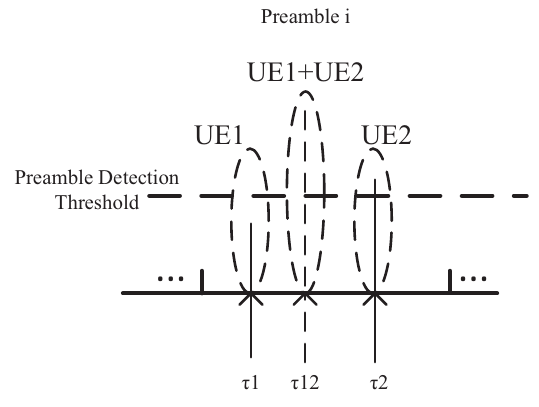}
	\caption{Power Delay Profile.}
	\label{fig:PDP}
\end{figure}

	Specifically, the BS extracts the received preamble within specific time/frequency resources through time-domain sampling and frequency-tone extraction. The received preamble from a typical device can be written as \cite{Ta2019,Shirvanimoghaddam2017,Lin2016,Wang2015,Liang2017}
		\begin{equation}
		{y_{r}^{i}\left[ k\right]} =  {\sqrt {\rho }{h_0} } z_{r}^{i}\left[ k+\tau_{0}\right] + n_0,
		\label{eq:received preamble}
		\end{equation}
		where $ \tau_{0} $ is the sequence shift caused by the propagation delay from a typical device to the BS, $ \rho $ is the power control threshold, $ h_{0} $ is the channel between the BS and a typical device, and $ n_0 $ denotes complex Gaussian noise with zero mean and variance $\sigma_{n}^2$.
	
	The BS computes PDP of a typical device via time-domain correlation between the received preamble \eqref{eq:received preamble} and the local reference preamble sequence \eqref{eq:reference preamble}. Therefore, we formulate the PDP of a typical device in the $ m $th RACH attempt as
	\begin{equation}
	\begin{aligned}
	{\mathrm{PDP}^m\left[ \tau\right] } &= {\left|\sum_{k=0}^{N_{\rm ZC}-1}{y_{r}^{i}\left[ k\right]}z_{r}^{i}\left[ k+\tau\right]^*\right|^2}\\&=\left| {\sum\limits_{k = 0}^{{{N}_{\rm{ZC-1}}}} \left( {\sqrt {\rho }{h_0} } z_{r}^{i}\left[ k+\tau_{0}\right] + n_0\right) {z_{r}^{i}\left[ k+\tau\right]}^ * } \right|^2.
	\end{aligned}
	\label{eq:power delay profile}
	\end{equation}

	\subsubsection{Signal to Noise plus Interference Ratio}
	SINR is utilized to model the PUSCH decoding (i.e., Msg3 or MsgA) when the preamble is successfully detected at the BS. As we mentioned earlier, each device transmits a randomly chosen preamble to the BS, and thus devices choosing same preamble in the same RA subframe cause the intra-cell interference in PUSCH transmission. We formulate $ \mathrm{SINR} $ of the PUSCH transmission in $ m $th RACH attempt as \cite{Jiang2018,Gharbieh2017}
	\begin{equation}
	{{\mathrm{SINR}}^{m}} = {\rho | h_{0}| ^2}/\left( {\mathcal{I}_{\mathrm{intra}}^{m}+\sigma_{n}^2}\right) 
	\label{eq:SINR},
	\end{equation}
	where
	\begin{equation}
	{{\mathcal {I}}^{m}_{\mathrm{ intra}}}=\sum \limits _{{j} \in {\mathcal{ Z}}_{\mathrm{ in}} }{{{\mathsf{1}}}_{\{ {N^{m}_{{\rm New}_{j}}}+{N^{m}_{{\rm Cum}_{j}}} > 0\} }} {\rho |{h_{j}}|^2}.
	\label{eq:Intra interference}
	\end{equation}
	
	In \eqref{eq:SINR}, $ h_{0} $ is the channel from a typical device to the BS, and $ \mathcal{I_{\mathrm{intra}}} $ is the aggregated intra-cell interference in the $ m $th RACH attempt.
	In \eqref{eq:Intra interference}, $ {\mathcal{ Z}}_{\mathrm{ in}} $ is the set of intra-cell interfering devices, $ {N^{m}_{{\rm New}_{j}}} $ is the number of new arrived packets in the $ m $th RACH attempt of $ j $th interfering UE, $ {N^{m}_{{\rm Cum}_{j}}} $ is the number of accumulated packets in the $ m $th RACH attempt of $ j $th interfering device. $ \mathsf{1}_{\left\lbrace . \right\rbrace } $is the indicator function that takes the value 1 if the statement $ \mathsf{1}_{\left\lbrace . \right\rbrace } $ is true, and zero otherwise. Whether the device generates interference is determined by the fact that if the condition $ {{\mathsf{1}}}_{\{ {N^{m}_{{\rm New}_{j}}}+{N^{m}_{{\rm Cum}_{j}}} > 0\} } $ has been satisfied. This means that the device is able to generate interference only when its buffer is non-empty.

	\subsubsection{Block Error Rate}
	For data transmission after the successful RA procedure (i.e., 4/2-step RA schemes), we assume the BLER of each data transmission is $ B $ (e.g., $ B=0.1 $). This is because the BS performs link adaption to adaptively change the modulation and coding scheme, which can guarantee the BLER target \cite{Bruno2014, Yu2017}. If the data is transmitted during the RA procedure (i.e., 4/2-step SDT RA schemes), we do not consider data transmission separately for fairness.

	\section{Transmission Success Probability and average throughput}
	In this section, we analyze the preamble detection, PUSCH decoding, data transmission, overall packet transmission success probability, and average throughput. As we mentioned earlier, the devices are spatially distributed in $ \mathbb{R}^2 $ following independent homogeneous PPP ${\mathrm{\Phi_D}}$ with intensity $\mathrm{\lambda_D}$. We assume that the BS has an available preamble pool with the number of non-dedicated preambles $ \xi $, known by the devices. Each preamble has an equal probability $ 1/\xi $ to be chosen by the device, hence the average number of the devices using the same preamble $ \lambda_{\mathrm{Dp}} $ is 
		\begin{equation}
		\lambda_{\mathrm{Dp}} = \lambda_{\mathrm{D}}/\xi,
		\end{equation}
		where $ \lambda_{\mathrm{D}} $ is the device intensity in the cell.
	
	The device generates interference only when it has data to transmit in the $m$th RACH attempt, thus we define the non-empty buffer probability of a typical device as $ \mathcal{T}^{m}=\mathbb{P}\left[ {N^{m}_{{\rm New}}}+{N^{m}_{{\rm Cum}}} > 0 \right] $, where $N^{m}_{{\rm New}}$ is the number of new arrived packets in the $m$th RACH attempt, and $N^{m}_{{\rm Cum}}$ is the number of accumulated packets in the $m$th RACH attempt.
	
	As the new packets arrival at each device $ N_\mathrm{new}^{m} $ in the $m$th RACH attempt is modelled by independent Poisson process $ \Lambda_\mathrm{new}^{m} $ with the intensity $ \mu_\mathrm{new}^{m} $, the packets departure can be treated as an approximated thinning process (i.e., the thinning factor is a function relating to the overall transmission success probability $ \mathcal{P}^{m} $, and the non-empty buffer probability $\mathcal{T}^{m}$) of the arrived packets. Therefore, after this thinning process in a specific RACH attempt, the least packets (i.e. the accumulated packets $ N_{\mathrm{Cum}}^{m} $) number at each device can be approximated as Poisson distribution $ \Lambda_\mathrm{cum}^{m} $ with the intensity $ \mu_\mathrm{cum}^{m} $.
	
	Similar as \cite{Jiang2018}, the intensity of accumulated
		packets $\mu_\mathrm{cum}^{m} (m > 1)$ in the $m$th RACH attempt is derived as
		\begin{equation}
		\mu^m_{\mathrm{Cum}}=\mu^{m-1}_{\mathrm{New}}+\mu^{m-1}_{\mathrm{Cum}}-\mathcal{P}^{m-1}\mathcal{T}^{m-1},
		\end{equation}
		where $\mu^{m-1}_{\mathrm{New}}$ and $\mu^{m-1}_{\mathrm{Cum}}$ are the intensity of new packets and accumulated packets in the $m-1$ RACH attempt. $\mathcal{P}^{m-1}$ and $\mathcal{T}^{m-1}$ are the overall transmission success probability and non-empty buffer probability in the Poisson approximation. Thus, the non-empty probability of each device in the $m$th RACH attempt is derived as 
		\begin{equation}
		\mathcal{T}^{m}=1-e^{-\mu^{m}_{\mathrm{New}}-\mu^{m}_{\mathrm{Cum}}}.
		\label{eq:non-empty probability}
		\end{equation}
	
	Therefore, the intensity of interfering devices is $\lambda_{Dp}\mathcal{T}^{m}$, and the Probability Mass Function (PMF) of the number of interfering devices in the $m$th RACH attempt is derived as 
		\begin{equation}
		\mathbb{P}\left[N=n\right]={e^{-\lambda_{\mathrm{Dp}}\mathcal{T}^{m}}\left(\lambda_{\mathrm{Dp}}\mathcal{T}^{m} \right)^{n}}/{n!},
		\label{eq:Interfering number}
		\end{equation}
		where $N$ is the number of interfering devices.

	\subsection{Preamble Detection in Msg1 and MsgA}
	The preamble detection is performed at step 1 of each RA scheme by calculating the PDP, where the peak values of devices choosing the same preamble are separate due to different propagation delays between each device and BS. It is important to note the BS cannot detect the collision during preamble detection in step 1 of RA schemes, because the multiple peak values may be also caused by the multipath effect\cite{anton2014machine, Sesia2011, Dahlman2013}\footnote{When the cell size is more than twice the distance corresponding to the maximum delay spread, the BS may be able to differentiate two devices selecting the same preamble.}.
	
	For the brevity of exposition, we define the event with $ n $ interfering devices as $ \mathcal{A}= \left\lbrace N=n\right\rbrace  $. Considering that the preamble detection is successful when at least one PDP peak values among $ n+1 $ colliding devices is above the threshold $ \lambda _{{\rm{th}}} $, the preamble detection success probability of a typical device in $ m $th RACH attempt conditioning on $ n $ interfering devices is defined as
		\begin{equation}
		{\cal P}_{{\rm{pre}|\mathcal{A}}}^m =1 - \prod\limits_{l = 1}^{n + 1} {\mathbb{P}[\mathrm{PDP}_{l}^{m}\left[ \tau_{l}\right] < {\lambda _{{\rm{th}}}}|\mathcal{A}]},
		\label{eq:preamble}
		\end{equation}
		which is characterized in the following lemma.
	
	\begin{lemma}
		The preamble detection success probability of a typical device in the $ m $th RACH attempt conditioning on $ n $ interfering devices is derived as
		\begin{equation}
		{\cal P}_{{\rm{pre}|\mathcal{A}}}^m =1 - \left[1-\mathrm{ exp}\left( -\dfrac{\lambda _{{\rm{th}}}}{\rho N_{\rm{ZC}}^2 + \sigma_{n}^2 N_{\rm{ZC}}}\right) \right]^{n+1},
		\label{eq:preamble transmission success probability}
		\end{equation}
		where $ N_{\rm{ZC}} $ is the length of the preamble sequence, $ \sigma_{\mathrm{n}}^2 $ is the noise power, $ \rho $ is the average received power at the BS, and $ \lambda _{{\rm{th}}} $ is the threshold for preamble detection.

	\end{lemma}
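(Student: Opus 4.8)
The plan is to identify the distribution of the relevant PDP peak for a single colliding device, evaluate its CDF at the detection threshold, and then observe that the $n+1$ detection events in \eqref{eq:preamble} are governed by independent and identically distributed peaks, so that the product collapses to an $(n+1)$-th power.

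First I would evaluate \eqref{eq:power delay profile} at the true delay lag $\tau=\tau_0$ of the device under consideration. By the ideal cyclic auto-correlation property \eqref{eq:preamble_corre}, the signal part collapses, $\sum_{k}\sqrt{\rho}\,h_0\,z_r^i[k+\tau_0]\,z_r^i[k+\tau_0]^{*}=\sqrt{\rho}\,h_0\,N_{\mathrm{ZC}}$, so the peak reads $\mathrm{PDP}^m[\tau_0]=\bigl|\sqrt{\rho}\,h_0\,N_{\mathrm{ZC}}+\tilde n_0\bigr|^2$ with filtered noise $\tilde n_0=\sum_{k=0}^{N_{\mathrm{ZC}}-1}n_0\,z_r^i[k+\tau_0]^{*}$. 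Since every ZC entry has unit modulus, $\tilde n_0$ is a sum of $N_{\mathrm{ZC}}$ i.i.d.\ $\mathcal{CN}(0,\sigma_n^2)$ terms, hence $\tilde n_0\sim\mathcal{CN}(0,N_{\mathrm{ZC}}\sigma_n^2)$. Combining with $h_0\sim\mathcal{CN}(0,1)$ independent of the noise gives $\sqrt{\rho}\,h_0\,N_{\mathrm{ZC}}+\tilde n_0\sim\mathcal{CN}\bigl(0,\rho N_{\mathrm{ZC}}^2+N_{\mathrm{ZC}}\sigma_n^2\bigr)$, so $\mathrm{PDP}^m[\tau_0]$ is exponentially distributed with mean $\rho N_{\mathrm{ZC}}^2+N_{\mathrm{ZC}}\sigma_n^2$. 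Its CDF at $\lambda_{\mathrm{th}}$ is therefore $1-\exp\bigl(-\lambda_{\mathrm{th}}/(\rho N_{\mathrm{ZC}}^2+\sigma_n^2 N_{\mathrm{ZC}})\bigr)$, which is precisely the common value of each factor in \eqref{eq:preamble}.

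Next I would justify that, conditionally on $\mathcal{A}=\{N=n\}$, the $n+1$ peaks $\mathrm{PDP}_l^m[\tau_l]$, $l=1,\dots,n+1$, are mutually independent with this same law, which is what legitimizes writing \eqref{eq:preamble} as a product. The delays $\tau_l$ are distinct under the BS-resolution assumption (Fig.~\ref{fig:PDP}), so at lag $\tau_l$ the leakage from device $l'\neq l$ is proportional to $c_{rr}[\tau_{l'}-\tau_l]=0$ by \eqref{eq:preamble_corre}; thus each peak depends only on its own fading coefficient $h_l$ and its own filtered noise $\tilde n_l=\sum_k n_0\,z_r^i[k+\tau_l]^{*}$. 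The $\{h_l\}$ are i.i.d.\ by assumption, and the $\{\tilde n_l\}$ are jointly Gaussian and, again by \eqref{eq:preamble_corre} applied to the white noise, pairwise uncorrelated, hence independent. Substituting the common factor into \eqref{eq:preamble} and taking the $(n+1)$-th power yields \eqref{eq:preamble transmission success probability}.

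I expect the joint-independence argument of the last step to be the main subtlety, since it simultaneously needs the vanishing of inter-device signal leakage and the decorrelation of the filtered noise across distinct delay lags, both of which rely on the idealized delta-function identity \eqref{eq:preamble_corre} together with the whiteness of $n_0$ and the modeling assumption that propagation delays are resolvable. It is worth flagging that for a practical, non-ideal ZC autocorrelation these cross-terms would not vanish exactly, so the closed form in \eqref{eq:preamble transmission success probability} should be read as an exact consequence of the idealization made in the system model rather than of the physical preamble processing itself.
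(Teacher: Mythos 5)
Your proposal is correct and follows essentially the same route as the paper's Appendix A: evaluate the PDP at the true lag, use the ideal ZC autocorrelation to reduce the peak to $\bigl|N_{\rm ZC}\sqrt{\rho}\,h_0+\tilde n_0\bigr|^2$, identify it as exponential with mean $\rho N_{\rm ZC}^2+\sigma_n^2 N_{\rm ZC}$, and substitute the resulting CDF into \eqref{eq:preamble}. Your extra discussion of why the $n+1$ peaks are i.i.d.\ (vanishing cross-correlation leakage and decorrelated filtered noise) is a justification the paper leaves implicit in the product form of \eqref{eq:preamble}, and is a welcome, if not strictly different, addition.
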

	
	\begin{proof}
		See Appendix A.
	\end{proof}

	We then derive the overall preamble detection success probability of a typical device in the following \textbf{Corollary 1}.
	
	\begin{corollary}
		The overall preamble detection success probability of a typical device in $ m $th time slot is derived as
		\begin{equation}
		\begin{split}
		\mathcal{P}^{m}_{\mathrm{pre}} &=  \sum_{n=0}^{\infty}\mathbb{P}\left[N=n\right] {\cal P}_{{\rm{pre}|\mathcal{A}}}^m= \sum_{n=0}^{\infty} \dfrac{e^{-\lambda_{\mathrm{Dp}}\mathcal{T}^{m}}\left(\lambda_{\mathrm{Dp}}\mathcal{T}^{m} \right)^{n}}{n!}\times\\& \left\lbrace 1 - \left[1-\mathrm{ exp}\left( -\dfrac{\lambda _{{\rm{th}}}}{\rho N_{\rm{ZC}}^2 + \sigma_{n}^2 N_{\rm{ZC}}}\right) \right]^{n+1}\right\rbrace.
		\end{split}
		\label{eq:overall_preamble}
		\end{equation}
		
	\end{corollary}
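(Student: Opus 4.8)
The plan is to obtain $\mathcal{P}^m_{\mathrm{pre}}$ from the conditional quantity $\mathcal{P}^m_{\mathrm{pre}|\mathcal{A}}$ of Lemma~1 by de-conditioning on the random number $N$ of active co-preamble interferers through the law of total probability. The events $\mathcal{A}=\{N=n\}$, $n=0,1,2,\dots$, form a countable partition of the sample space, and $\mathcal{P}^m_{\mathrm{pre}|\mathcal{A}}$ is by definition the probability of successful preamble detection given $\mathcal{A}$, so
\begin{equation}
\mathcal{P}^m_{\mathrm{pre}} = \sum_{n=0}^{\infty}\mathbb{P}[N=n]\,\mathcal{P}^m_{\mathrm{pre}|\mathcal{A}} ,
\end{equation}
which is the first equality of the statement.

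Next I would substitute the two ingredients that are already available. The PMF of $N$ is \eqref{eq:Interfering number}, namely $\mathbb{P}[N=n]=e^{-\lambda_{\mathrm{Dp}}\mathcal{T}^m}(\lambda_{\mathrm{Dp}}\mathcal{T}^m)^n/n!$; this holds because each of the cell's co-preamble devices (Poisson with intensity $\lambda_{\mathrm{Dp}}=\lambda_{\mathrm{D}}/\xi$) is retained as an interferer independently with the non-empty-buffer probability $\mathcal{T}^m$ of \eqref{eq:non-empty probability}, so the interfering field is a thinned PPP of intensity $\lambda_{\mathrm{Dp}}\mathcal{T}^m$. The conditional detection probability is $\mathcal{P}^m_{\mathrm{pre}|\mathcal{A}}=1-[1-\exp(-\lambda_{\mathrm{th}}/(\rho N_{\mathrm{ZC}}^2+\sigma_n^2 N_{\mathrm{ZC}}))]^{n+1}$ from Lemma~1. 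Plugging both into the sum yields exactly \eqref{eq:overall_preamble}.

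The one point worth a line of justification is that the de-conditioning is legitimate: conditioning on $\{N=n\}$ fixes only how many interferers are present and, because the channel gains are i.i.d.\ and independent of the device locations and of the buffer occupancies, it does not change the joint law of the $n+1$ per-device PDP peaks entering Lemma~1 — this is precisely the independence that Lemma~1 already exploited, so nothing new is needed. One may also note that the series converges absolutely, since each summand is non-negative and bounded above by $\mathbb{P}[N=n]$, making the total at most $1$ and justifying the infinite-sum representation. I do not expect a genuine obstacle here; the only thing to be careful about is attaching the Poisson intensity $\lambda_{\mathrm{Dp}}\mathcal{T}^m$ to the set of \emph{active} co-preamble devices rather than to all co-preamble devices, which is exactly the role of the factor $\mathcal{T}^m$.
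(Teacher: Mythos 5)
Your proposal is correct and matches the paper's (very brief) argument: the corollary is obtained by de-conditioning Lemma~1's result over the Poisson-distributed number of active co-preamble interferers in \eqref{eq:Interfering number} via the law of total probability. The extra remarks on the independence of channel gains from $N$ and on absolute convergence are sound but not needed beyond what Lemma~1 already establishes.
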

	
		\begin{proof}
		See the proof of \textbf{Lemma 1}.
	\end{proof}

	\subsection{PUSCH Decoding in Msg3 and MsgA}
	The PUSCH decoding is performed either in step 3 of 4-step and 4-step SDT RA schemes, or step 1 of the 2-step and 2-step SDT RA schemes. In order to analyse the collision happens in Msg3 or MsgA, we derive the PUSCH decoding success probability of a typical device in $ m $th RACH attempt conditioning on $ n $ interfering devices and preamble detection success. For the brevity of exposition, we define the preamble detection success of a typical device with $ n $ interfering devices as event $ \mathcal{B} =\left\lbrace\prod\limits_{l = 1}^{n + 1} {{{\mathsf{1}}}_{\{ \mathrm{PDP}_{l}^{m}< {\lambda _{{\rm{th}}}}\} }= 0}, \mathcal{A} \right\rbrace$, where ${{{\mathsf{1}}}_{\{ \mathrm{PDP}_{l}^{m}< {\lambda _{{\rm{th}}}}\} }}$ takes value 1 if the PDP peak value of device $l$ is below the threshold ${\lambda _{{\rm{th}}}}$.

	For the basic receiver, the BS has no multi-user detection capabilities, which means all transmissions are failed when a collision takes place \cite{Jiang2018a}. However, for an advanced receiver, a specific signal can still be decoded  when multiple signals are received at different powers, this is the so called capturing capability. This phenomenon occurs when the strongest signal power received from a typical device is sufficiently large. In other words, the capture effect occurs when the SINR\footnote{In \cite{Kim2017,Yue1991}, the signal-to-interference-ratio (SIR) is used, however, we use SINR as \cite{Zanella2012}} of the strongest signal is larger than a specific threshold. As we have assumed perfect power control to compensate for the path-loss, the SINR in the advanced receiver actually depends on the small-scale Rayleigh fading. Therefore, the PUSCH decoding success probability of a typical device in $ m $th RACH attempt conditioning on $ n $ interfering devices and preamble detection success is  defined as
		\begin{equation}
		\begin{aligned}
		&{\cal P}_{{\mathrm{pus}|\mathcal{B}}}^{m} =  {\mathbb{P}\left[ {\rm{SINR}}_{o} > {\gamma _{{\rm{th}}}},|h_{o}|^2>|h_{j}|^2  |\mathcal{B}\right] }\\&= {\mathbb{P}[{\mathrm{SINR}}_{o} > {\gamma _{{\mathrm{th}}}}||h_{o}|^2>|h_{j}|^2 ,\mathcal{B}]}\mathbb{P}[{|h_{o}|^2>|h_{j}|^2|\mathcal{B}] },
		\label{eq:Msg3 capture effect model}
		\end{aligned}
		\end{equation}
		where ${j} \in {\mathcal{ Z}}_{\mathrm{ in}}$ represents the set of intra-cell interfering devices in \eqref{eq:Intra interference}, and the PUSCH decoding success probability is characterized in the following lemma.
	
	\begin{lemma}
		The PUSCH decoding success probability conditioning on $ n $ interfering devices and preamble detection success is derived as
		\begin{equation}
		{\cal P}_{{\mathrm{pus}|\mathcal{B}}}^{m} = \dfrac{\sum_{k=1}^{n+1}\tbinom{n+1}{k}(-1)^{k+1}\mathrm{ exp}\left\lbrace \dfrac{-k\gamma_{\rm{th}}\sigma_{\mathrm{n}}^{2}}{\rho}\right\rbrace}{ {\left( n+1\right) (1+\gamma _{th})^{n}}},
		\label{eq:capture final result}
		\end{equation}
		where $ \sigma_{\mathrm{n}}^{2} $ is the average noise power, $ \rho $ is the average received power at the BS, and $ \gamma _{{\rm{th}}} $ is the SINR threshold for PUSCH decoding.
	\end{lemma}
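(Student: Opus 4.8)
The plan is to compute the conditional PUSCH decoding success probability directly from its definition in \eqref{eq:Msg3 capture effect model}, exploiting the fact that under full path-loss inversion power control all received powers reduce to i.i.d.\ unit-mean exponential random variables $|h_0|^2,|h_1|^2,\dots,|h_n|^2$ (the squared magnitudes of the $\mathcal{CN}(0,1)$ fading coefficients). The event $\mathcal B$ fixes that there are $n$ interfering devices and that the preamble was detected; since preamble detection in the PDP model is independent of the fading magnitudes used in the SINR expression (detection depends on the correlation peak, not on the relative ordering of $|h_j|^2$), conditioning on $\mathcal B$ contributes only the combinatorial factor $n$ and does not bias the joint law of the $|h_j|^2$. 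So effectively I must evaluate $\mathbb{P}\big[\rho|h_0|^2/(\mathcal I_{\mathrm{intra}}^m+\sigma_n^2)>\gamma_{\mathrm{th}},\ |h_0|^2>\max_{j}|h_j|^2\big]$ where $\mathcal I_{\mathrm{intra}}^m=\sum_{j=1}^{n}\rho|h_j|^2$ (all $n$ interferers have non-empty buffers once we are inside the interfering set, so the indicator is $1$).

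The key steps, in order, are: (i) condition on $|h_0|^2=x$; the capture condition $|h_0|^2>|h_j|^2$ for all $j$ forces each interferer's power into $[0,x]$, so I work with truncated exponentials. (ii) Write the SINR-outage-complement probability as $\mathbb{P}[\rho x > \gamma_{\mathrm{th}}(\sum_{j=1}^n \rho|h_j|^2+\sigma_n^2),\ |h_j|^2<x\ \forall j]$; this is an integral over the simplex-like region $\{y_j\in[0,x],\ \sum y_j < (\rho x-\gamma_{\mathrm{th}}\sigma_n^2)/(\gamma_{\mathrm{th}}\rho)\}$ against the product density $\prod e^{-y_j}$. (iii) Rather than integrate the simplex directly, I would use the standard trick of expanding the constraint via the Laplace/exponential identity or, more cleanly, apply inclusion–exclusion on the $n$ interferers: the complementary-CDF of a sum of i.i.d.\ exponentials intersected with the box constraints produces a $\sum_{k}\binom{n}{k}(-1)^k$ alternating sum, and the final integration over $x$ (with the extra $e^{-x}$ from $|h_0|^2$) yields the $\mathrm{exp}\{-k\gamma_{\mathrm{th}}\sigma_n^2/\rho\}$ terms. (iv) Collect the factor $(1+\gamma_{\mathrm{th}})^{-n}$, which arises from integrating the interferers' exponential densities against the scaled constraint, and the $1/(n+1)$ prefactor, which is the probability that a designated one of the $n+1$ symmetric exponential variables is the maximum; reindexing the alternating sum from $k$ to $k+1$ to start at $k=1$ gives exactly \eqref{eq:capture final result}.

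The main obstacle I anticipate is step (iii): carefully handling the joint constraint "$\sum_j y_j$ below a threshold \emph{and} each $y_j$ below $x$" so that the box truncation from the capture condition and the sum truncation from the SINR condition are combined without double-counting. The cleanest route is probably to first integrate out the SINR constraint over the unbounded region using the Laplace transform of the exponential (which gives the $(1+\gamma_{\mathrm{th}})^{-n}$ factor immediately) and then argue that the capture condition $|h_0|^2>|h_j|^2$ is automatically compatible — or to absorb it via the $1/(n+1)$ symmetry argument applied \emph{after} establishing that conditioning on being the maximum and conditioning on the SINR event factor appropriately. A secondary subtlety is justifying that conditioning on the preamble-detection event $\mathcal B$ genuinely leaves the fading magnitudes' distribution unchanged; I would state this as an independence remark referencing the PDP model (where detection depends only on the per-device correlation peaks, already shown in Lemma 1 to be exponential and handled separately), so no correlation is introduced between $\mathcal B$ and the SINR/capture variables.
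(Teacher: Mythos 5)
Your ``cleanest route'' fallback at the end is essentially the paper's proof, but your primary route (steps (i)--(iii)) is not, and as described it would not land on \eqref{eq:capture final result}. The paper's argument is much blunter: it factors $ {\cal P}_{{\mathrm{pus}|\mathcal{B}}}^{m}=\mathbb{P}[\mathrm{SINR}_o>\gamma_{\rm th}\mid |h_o|^2>|h_j|^2,\mathcal{B}]\cdot\tfrac{1}{n+1}$, identifies the first factor with the CCDF of the \emph{maximum} of the $n+1$ i.i.d.\ unit exponentials evaluated at $\gamma_{\rm th}(\sigma_n^2+\mathcal{I}_{\rm intra})/\rho$, i.e.\ $\mathbb{E}_{\mathcal{I}_{\rm intra}}\{1-(1-e^{-\gamma_{\rm th}(\sigma_n^2+\mathcal{I}_{\rm intra})/\rho})^{n+1}\}$, binomially expands the $(n+1)$-st power --- this is where the $\binom{n+1}{k}$ and the $e^{-k\gamma_{\rm th}\sigma_n^2/\rho}$ come from --- and then inserts the Laplace transform $\mathcal{L}_{\mathcal{I}_{\rm intra}}(\gamma_{\rm th}/\rho)=(1+\gamma_{\rm th})^{-n}$ quoted from \cite{Jiang2018}. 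Your plan instead extracts the alternating sum by inclusion--exclusion over the $n$ interferers' box constraints; that produces $\binom{n}{k}$ coefficients, and no reindexing ``from $k$ to $k+1$'' turns $\binom{n}{k}$ into $\binom{n+1}{k}$. The combinatorial origin of the sum is the $n+1$ contenders for the maximum, not the $n$ interferers.

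A second concrete issue: the exact computation you set up in (i)--(ii) --- truncating each interferer's gain to $[0,x]$ and integrating the joint box-plus-simplex region --- retains the coupling between the capture event and $\mathcal{I}_{\rm intra}$ (the interference seen by the strongest device is built from precisely the gains being conditioned to lie below $|h_0|^2$). Carried through honestly this yields a messier expression than the lemma; the stated closed form is reached only by averaging over $\mathcal{I}_{\rm intra}$ as if it were independent of the maximum, which is what the paper silently does and what your fallback correctly anticipates. Relatedly, after the binomial expansion the $k$-th term carries $e^{-k\gamma_{\rm th}\mathcal{I}_{\rm intra}/\rho}$, so a literal evaluation gives $\mathcal{L}_{\mathcal{I}_{\rm intra}}(k\gamma_{\rm th}/\rho)=(1+k\gamma_{\rm th})^{-n}$ rather than the $k$-independent $(1+\gamma_{\rm th})^{-n}$ in \eqref{eq:capture final result}; your intention to ``collect the factor $(1+\gamma_{\rm th})^{-n}$'' matches the lemma but not the expansion you describe. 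To align with the paper, adopt the max-CCDF decomposition explicitly and drop the truncated-exponential integral, which is both unnecessary for this statement and incompatible with its closed form.
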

	
	\begin{proof}
		See Appendix B.
	\end{proof}
	
		We then derive the overall PUSCH decoding success probability of a typical device in $ m $th time slot in the following \textbf{Corollary 2}.
	\begin{corollary}
		The overall PUSCH decoding success probability of a typical device in $ m $th time slot is derived as
		\begin{equation}
		\begin{split}
		&\mathcal{P}^{m}_{\mathrm{pus}}=\\&  \sum_{n=0}^{\infty}\mathbb{P}\left[N=n \right] {\cal P}_{{\rm{pre}|\mathcal{A}}}^m{\cal P}_{{\rm{pus}|\mathcal{B}}}^{m}=\sum_{n=0}^{\infty} \dfrac{e^{-\lambda_{\mathrm{Dp}}\mathcal{T}^{m}}\left(\lambda_{\mathrm{Dp}}\mathcal{T}^{m} \right)^{n}}{n!}\times\\& \left\lbrace 1 - \left[1-\mathrm{ exp}\left( -\dfrac{\lambda _{{\rm{th}}}}{\rho N_{\rm{ZC}}^2 + \sigma_{n}^2 N_{\rm{ZC}}}\right) \right]^{n+1}\right\rbrace\times\\& \dfrac{\sum_{k=1}^{n+1}\tbinom{n+1}{k}(-1)^{k+1}\mathrm{ exp}\left\lbrace \dfrac{-k\gamma_{\rm{th}}\sigma_{\mathrm{n}}^{2}}{\rho}\right\rbrace }{\left( n+1\right) (1+\gamma _{th})^{n}}.
		\end{split}
		\label{eq:overall_msg3_u}
		\end{equation}
	\end{corollary}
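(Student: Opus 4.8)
The plan is to derive \eqref{eq:overall_msg3_u} by deconditioning the per-realization success probability over the number $N$ of devices that select the same preamble as the typical device in the same RA subframe. First I would invoke the law of total probability with respect to the event $\mathcal{A}=\{N=n\}$, so that the overall PUSCH decoding success probability is the $\mathbb{P}[N=n]$-weighted average of the conditional success probability given $n$ interferers. Since a PUSCH decoding success requires, in this order, that the typical device's preamble be detected and then that its PUSCH be decoded among the colliding transmissions, the chain rule gives the conditional success probability as the product $\mathcal{P}^m_{\mathrm{pre}\mid\mathcal{A}}\,\mathcal{P}^m_{\mathrm{pus}\mid\mathcal{B}}$, where $\mathcal{B}$ is precisely the preamble-detection-success event (with $n$ interferers) on which Lemma 2 is already conditioned.

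The second step is pure substitution: plug the Poisson PMF \eqref{eq:Interfering number} for $\mathbb{P}[N=n]$, the closed form of Lemma 1 in \eqref{eq:preamble transmission success probability} for $\mathcal{P}^m_{\mathrm{pre}\mid\mathcal{A}}$, and the closed form of Lemma 2 in \eqref{eq:capture final result} for $\mathcal{P}^m_{\mathrm{pus}\mid\mathcal{B}}$. Collecting the three factors under the summation over $n\ge 0$ reproduces \eqref{eq:overall_msg3_u} term by term; Corollary 1 is recovered as the special case obtained by dropping the third factor.

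The only point that warrants explicit justification — and the closest thing to an obstacle — is why the conditional probability factorizes as a clean product across the two stages and why the relevant interferer count is the same Poisson variable for both. This follows from the model: conditioning on $\mathcal{A}=\{N=n\}$ fixes the same set of $n$ preamble-sharing devices that provide the PDP peaks in preamble detection and act as the intra-cell interferers in \eqref{eq:Intra interference} for PUSCH decoding, so no independence between stages is invoked beyond the chain rule itself, and Lemma 2 is stated directly in terms of $\mathcal{B}$ so the two factors compose without any extra conditioning. I would also recall that $N$ is Poisson with mean $\lambda_{\mathrm{Dp}}\mathcal{T}^m$ because independently thinning the device PPP by preamble choice (probability $1/\xi$) and by the non-empty-buffer mark (probability $\mathcal{T}^m$) leaves a Poisson process, as already used to obtain \eqref{eq:Interfering number}.

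Finally, I would note that the interchange is unconditionally valid: each summand is dominated by $\mathbb{P}[N=n]$ since both bracketed factors lie in $[0,1]$, so the series converges absolutely and the displayed expression is exact rather than an approximation at this step. I do not anticipate any real difficulty here; the care is entirely in keeping the conditioning events $\mathcal{A}$ and $\mathcal{B}$ aligned with Lemmas 1 and 2 so that the three factors appear exactly as written in \eqref{eq:overall_msg3_u}.
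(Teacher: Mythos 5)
Your proposal is correct and follows essentially the same route as the paper, which simply deconditions over the Poisson number of interferers and multiplies the conditional probabilities from Lemma 1 and Lemma 2 (the paper's entire proof is a pointer to those two lemmas). Your additional remarks on the alignment of the conditioning events $\mathcal{A}$ and $\mathcal{B}$ and on absolute convergence of the series are sound but not needed beyond what the paper already assumes.
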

	\begin{proof}
		See the proofs of \textbf{Lemma 1} and \textbf{Lemma 2}.
	\end{proof}

	\subsection{Data Transmission after RACH}
	For the brevity of exposition, we define the preamble detection success, and PUSCH decoding success with $ n $ interfering devices as event $ \mathcal{C}=\left\lbrace{\rm{SINR}}_{o} > {\gamma _{{\rm{th}}}},|h_{o}|^2>|h_{j}|^2, \mathcal{B} \right\rbrace $. We assume the BLER of each data transmission is $ B$\footnote{As the overall successful transmission probability is mainly limited by the successful RACH probability, we consider basic independent decoding process in each data transmission without chase-combing or incremental redundancy mechanism to make the comparison results more intuitive.}, hence the data transmission success probability of a typical device in $ m $th RACH attempt conditioning on event $ \mathcal{C} $ is derived as
	\begin{equation}
	{\cal P}_{{\mathrm{data}| \mathcal{C} }}^{m} = 1-B^{\mathrm{K}},
	\label{eq:data probability}
	\end{equation}
	where $ \mathrm{K} $ is the maximum data HARQ transmission times.

	\subsection{Overall Packet Transmission Success Probability}
	\subsubsection{4-step Random Access}
	Since the data is transmitted after successful 4-step RA procedure, the overall packet transmission success only occurs when preamble detection, PUSCH decoding, and data transmission are all successful. Hence, the overall packet transmission success probability of 4-step RA scheme in $ m $th RACH attempt is derived as
	\begin{equation}
	\begin{split}
	&\mathcal{P}^{m}_{\mathrm{4step}}=\sum_{n=0}^{\infty}\mathbb{P}\left[N=n\right]{\cal P}_{{\rm{pre}|\mathcal{A}}}^m{\cal P}_{{\rm{pus}|\mathcal{B}}}^{m}{\cal P}_{\rm{data}|\mathcal{C}}\\&=\sum_{n=0}^{\infty}\left[ {e^{-\lambda_{\mathrm{Dp}}\mathcal{T}^{m}_{\mathrm{4step}}}\left(\lambda_{\mathrm{Dp}}\mathcal{T}^{m}_{\mathrm{4step}} \right)^{n}}/{n!}\right]\times\\& \left\lbrace 1 - \left[1-\mathrm{ exp}\left( -\dfrac{\lambda _{{\rm{th}}}}{\rho N_{\rm{ZC}}^2 + \sigma_{n}^2 N_{\rm{ZC}}}\right)\right]^{n+1}\right\rbrace\times\\&
	\dfrac{\sum_{k=1}^{n+1}\tbinom{n+1}{k}(-1)^{k+1}\mathrm{ exp}\left\lbrace \dfrac{-k\gamma_{\rm{th}}\sigma_{\mathrm{n}}^{2}}{\rho}\right\rbrace\left( 1-B^{\mathrm{K}}\right) }{\left( n+1\right) (1+\gamma _{th})^{n}},
	\end{split}
	\label{eq:four step overall success probability}
	\end{equation}
	where $\mathcal{T}^{m}_{\mathrm{4step}}$ is non-empty buffer given in \eqref{eq:non-empty probability}.
	\subsubsection{4-step SDT Random Access}
	The overall packet transmission success occurs when both preamble detection and PUSCH decoding are successful. Hence, the overall packet transmission success probability of 4-step SDT in $ m $th RACH attempt is derived as
	\begin{equation}
	\begin{aligned}
	&\mathcal{P}^{m}_{\mathrm{4stepSDT}}=\sum_{n=0}^{\infty}\mathbb{P}\left[N=n\right]{\cal P}_{{\rm{pre}|\mathcal{A}}}^m{\cal P}_{{\rm{pus}|\mathcal{B}}}^m\\&=\sum_{n=0}^{\infty}\left[ {e^{-\lambda_{\mathrm{Dp}}\mathcal{T}^{m}_{\mathrm{4stepSDT}}}\left(\lambda_{\mathrm{Dp}}\mathcal{T}^{m}_{\mathrm{4stepSDT}} \right)^{n}}/{n!}\right]\times\\& \left\lbrace 1 - \left[1-\mathrm{ exp}\left( -\dfrac{\lambda _{{\rm{th}}}}{\rho N_{\rm{ZC}}^2 + \sigma_{n}^2 N_{\rm{ZC}}}\right)\right]^{n+1}\right\rbrace\times\\& \dfrac{\sum_{k=1}^{n+1}\tbinom{n+1}{k}(-1)^{k+1}\mathrm{ exp}\left\lbrace \dfrac{-k\gamma_{\rm{th}}\sigma_{\mathrm{n}}^{2}}{\rho}\right\rbrace}{ {\left( n+1\right) (1+\gamma _{th})^{n}}}.
	\end{aligned}
	\label{eq:EDT overall success probability}
	\end{equation}

	\subsubsection{2-step Random Access}
	Remind that if the BS detects the preamble but fails to decode any PUSCH signal among $n+1$ colliding devices in MsgA, the fallback mechanism in 2-step RA scheme allows the devices to transmit Msg3 following the 4-step RA scheme. Therefore, we define the fallback probability, and PUSCH decoding success probability after the fallback as
		\begin{equation}
		\begin{split}
		\mathcal{P}^{m}_{\mathrm{fb}} =  \sum_{n=0}^{\infty}\mathbb{P}\left[N=n\right] {\cal P}_{{\rm{pre}|\mathcal{A}}}^m\left( 1-\left(n+1 \right) {\cal P}_{{\rm{pus}|\mathcal{B}}}^m\right), 
		\label{eq:overall_fb}
		\end{split}
		\end{equation}
		and
		\begin{equation}
		\begin{split}
		\mathcal{P}^{m}_{\mathrm{fb\_pus}} =&  \sum_{n=0}^{\infty}\mathbb{P}\left[N=n\right] {\cal P}_{{\rm{pre}|\mathcal{A}}}^m\left( 1-\left(n+1 \right) {\cal P}_{{\rm{pus}|\mathcal{B}}}^m\right){\cal P}_{{\rm{pus}|\mathcal{B}}}^m,
		\end{split}
		\label{eq:overall_fbmsg3}
		\end{equation}
		where term $\left( 1-\left(n+1 \right) {\cal P}_{{\rm{pus}|\mathcal{B}}}^m\right)$ is the probability that the BS fails to decode any MsgA PUSCH signal among $(n+1)$ colliding devices.
	
	Since the packet can be successfully transmitted either after 2-step RA procedure with successful preamble detection and PUSCH decoding, or after 4-step RA procedure with successful fallback PUSCH decoding, the overall packet transmission success probability is derived as
	\begin{equation}
	\begin{aligned}
	&{{\cal P}^m_{\mathrm{2step}}}=\sum_{n=0}^{\infty}\mathbb{P}\left[N=n\right]{\cal P}_{{\rm{pre}|\mathcal{A}}}^m{\cal P}_{{\rm{pus}|\mathcal{B}}}^m{\cal P}_{{\rm{data}|\mathcal{C}}}^m+\\&\sum_{n=0}^{\infty}\mathbb{P}\left[N=n\right]{\cal P}_{{\rm{pre}|\mathcal{A}}}^m\left(1-(n+1){\cal P}_{{\rm{pus}|\mathcal{B}}}^m\right){\cal P}_{{\rm{pus}|\mathcal{B}}}^m{\cal P}_{\rm{data}|\mathcal{C}}\\&=\sum_{n=0}^{\infty}\left[ {e^{-\lambda_{\mathrm{Dp}}\mathcal{T}^{m}_{\mathrm{2step}}}\left(\lambda_{\mathrm{Dp}}\mathcal{T}^{m}_{\mathrm{2step}} \right)^{n}}/{n!}\right]\times\\& \left\lbrace 1 - \left[1-\mathrm{ exp}\left( -\dfrac{\lambda _{{\rm{th}}}}{\rho N_{\rm{ZC}}^2 + \sigma_{n}^2 N_{\rm{ZC}}}\right)\right]^{n+1}\right\rbrace\times\\& \dfrac{\sum_{k=1}^{n+1}\tbinom{n+1}{k}(-1)^{k+1}\mathrm{ exp}\left\lbrace \dfrac{-k\gamma_{\rm{th}}\sigma_{\mathrm{n}}^{2}}{\rho}\right\rbrace\left( 1-B^{\mathrm{K}}\right) }{\left( n+1\right) (1+\gamma _{th})^{n}}\times\\&\left\lbrace 2-\dfrac{\sum_{k=1}^{n+1}\tbinom{n+1}{k}(-1)^{k+1}\mathrm{ exp}\left\lbrace \dfrac{-k\gamma_{\rm{th}}\sigma_{\mathrm{n}}^{2}}{\rho}\right\rbrace}{(1+\gamma _{th})^{n}}\right\rbrace .
	\end{aligned}
	\label{eq:2stepB_overall}
	\end{equation}
	
	\subsubsection{2-step SDT Random Access}
	Similar as 2-step RA scheme, the overall packet transmission success probability of 2-step SDT RA scheme is derived as
	\begin{equation}
	\begin{aligned}
	&{{\cal P}^m_{\mathrm{2stepSDT}}} =\sum_{n=0}^{\infty}\mathbb{P}\left[N=n\right]{\cal P}_{{\rm{pre}|\mathcal{A}}}^m{\cal P}_{{\rm{pus}|\mathcal{B}}}^m+\\&\sum_{n=0}^{\infty}\mathbb{P}\left[N=n\right]{\cal P}_{{\rm{pre}|\mathcal{A}}}^m\left(1-(n+1){\cal P}_{{\rm{pus}|\mathcal{B}}}^m\right){\cal P}_{{\rm{pus}|\mathcal{B}}}^m{\cal P}_{\rm{data}|\mathcal{C}}\\&=\sum_{n=0}^{\infty}\left[ {e^{-\lambda_{\mathrm{Dp}}\mathcal{T}^{m}_{\mathrm{2stepSDT}}}\left(\lambda_{\mathrm{Dp}}\mathcal{T}^{m}_{\mathrm{2stepSDT}} \right)^{n}}/{n!}\right]\times\\& \left\lbrace 1 - \left[ 1-\mathrm{ exp}\left( -\dfrac{\lambda _{{\rm{th}}}}{\rho N_{\rm{ZC}}^2 + \sigma_{n}^2 N_{\rm{ZC}}}\right)\right]^{n+1}\right\rbrace\times\\& \dfrac{\sum_{k=1}^{n+1}\tbinom{n+1}{k}(-1)^{k+1}\mathrm{ exp}\left\lbrace \dfrac{-k\gamma_{\rm{th}}\sigma_{\mathrm{n}}^{2}}{\rho}\right\rbrace}{\left( n+1\right) (1+\gamma _{th})^{n}}\Vast\{ 1+\vast\{\\& 1-\dfrac{\sum_{k=1}^{n+1}\tbinom{n+1}{k}(-1)^{k+1}\mathrm{ exp}\left\lbrace \dfrac{-k\gamma_{\rm{th}}\sigma_{\mathrm{n}}^{2}}{\rho}\right\rbrace}{(1+\gamma _{th})^{n}}\vast\}\left( 1-B^{\mathrm{K}}\right)\Vast\}.
	\end{aligned}
	\label{eq:2stepA_overall}
	\end{equation}

		\subsection{Average Throughput}
We analyze the system throughput performance by assuming that all packets have the same packet size $ S $. As 2-step and 2-step SDT RA schemes integrate the fallback mechanism, the lengths of complete 2-step and 2-step SDT RACH procedure are the same as that of 4-step RA scheme. Therefore, we assume the length of 4-step RACH procedure as the period of system RACH opportunity $ \mathrm{T_{{RACH}}}$, and the throughput in the $ m $th RACH attempt is defined as
	\begin{equation}
	\begin{aligned}
	R^{m}={\mathcal{T}^{m}{\cal P}^mS}/{\mathrm{T_{{RACH}}}},
	\end{aligned}
	\label{eq:throughput}
	\end{equation}
	where $ {\cal T}^{m}{\cal P}^mS$ represents the total size of all successfully transmitted packet in $ m $th RACH attempt of each scheme,  which can be calculated based on \eqref{eq:non-empty probability}, \eqref{eq:four step overall success probability}, \eqref{eq:2stepB_overall}, and \eqref{eq:2stepA_overall}.
	
	\section{Average Energy Consumption}

	In this section, we analyze the average energy consumption of a typical device in $ m $th RACH attempt with 4/2-step, 4/2-step SDT RA schemes, respectively. Specifically, we derive the energy consumption of each message under either success or failure, and then calculate the average energy consumption based on the transmission success probability of each message in Section III.
	
	For ease of description, we first present definitions for general variables.  As illustrated in Fig.~\ref{fig:Timing relationship of 4-step and NR EDT RA schemes} and Table~\ref{notations}, $ \mathrm{T_{p}}$ is the preamble transmission time\cite{Dahlman2018}, $ \mathrm{T_{s}} $ is the slot time length, $  \mathrm{T_{d}} $ is the Physical Downlink Control Channel (PDCCH) decoding time, which contains the DCI, $ \mathrm{N_{RAR}} $ is the number of slots that RAR window occupies, $ \mathrm{N_{CRT}} $ is the number of slots that contention resolution timer (CRT) occupies, $\mathrm{N_{DCI}} $ is the number of slots between uplink data transmission and downlink DCI.  $ \mathrm{T_{K2}} $, $ \mathrm{T_{\triangle}} $ and $ \mathrm{T_{PUCCH}} $ are scheduling parameters defined in the standards\cite{Physicalcontrol, Physicaldata}. $ \mathrm{P_{s}} $ and $ \mathrm{P_{r}} $ are the power consumption when the device is in the sleep and receiving states separately, which are constants for all devices. As for the transmit power $ \mathrm{P_{t}} $, although the radiated power of each device depends on its distance to the BS due to the full-path power control, the power consumed by the amplifiers and the radio frequency (RF) hardware is actually much higher, almost regardless of the radiated power \cite{powersaving}. Therefore, the transmit power $ \mathrm{P_{t}} $ is also regarded as a constant.

	\begin{table}[htbp!]
		\caption{Notation Table}
		\begin{center}
			\begin{tabular}{|c|c|}
				\hline 
				\rowcolor{gray!25} \textbf{Notations}& \textbf{Physical means}  \\ 
				\hline
				$ \mathrm{T_{p}}$& The preamble transmission time  \\
				\hline
				\rowcolor{gray!25}$ \mathrm{T_{s}}$& The slot time length \\
				\hline
				$  \mathrm{T_{d}} $& The PDCCH decoding time \\
				\hline
				\rowcolor{gray!25}$  \mathrm{N_{RAR}} $& The number of slots that RAR window occupies \\
				\hline
				$  \mathrm{N_{CRT}} $& The number of slots that CRT occupies \\
				\hline
				$  \mathrm{N_{DCI}} $& The number of slots between data and downlink DCI \\
				\hline
				\rowcolor{gray!25}$  \mathrm{T_{K2}} $& The PUSCH scheduling parameter \\
				\hline
				$  \mathrm{T_{\triangle}} $& The specific PUSCH scheduling parameter for Msg3\\
				\hline
				\rowcolor{gray!25}$  \mathrm{T_{PUCCH}} $&The PUCCH scheduling parameter \\
				\hline
				$  \mathrm{P_{s}} $& The sleep power of a typical device \\
				\hline	
				\rowcolor{gray!25}$  \mathrm{P_{r}} $& The receiving power of a typical device \\
				\hline
				$  \mathrm{P_{t}} $& The transmit power of a typical device \\
				\hline		
			\end{tabular}
			\label{notations}
		\end{center}
	\end{table}

	\subsection{4-step Random Access}
	The energy consumption of 4-step RA scheme depends on the success or failure of preamble detection and PUSCH decoding.
	Therefore, we derive the energy consumption of each message with successful RA procedure, preamble detection failure, and PUSCH decoding failure, respectively.
	\begin{figure*}[htbp!]
		\centering
		\includegraphics[scale=0.8]{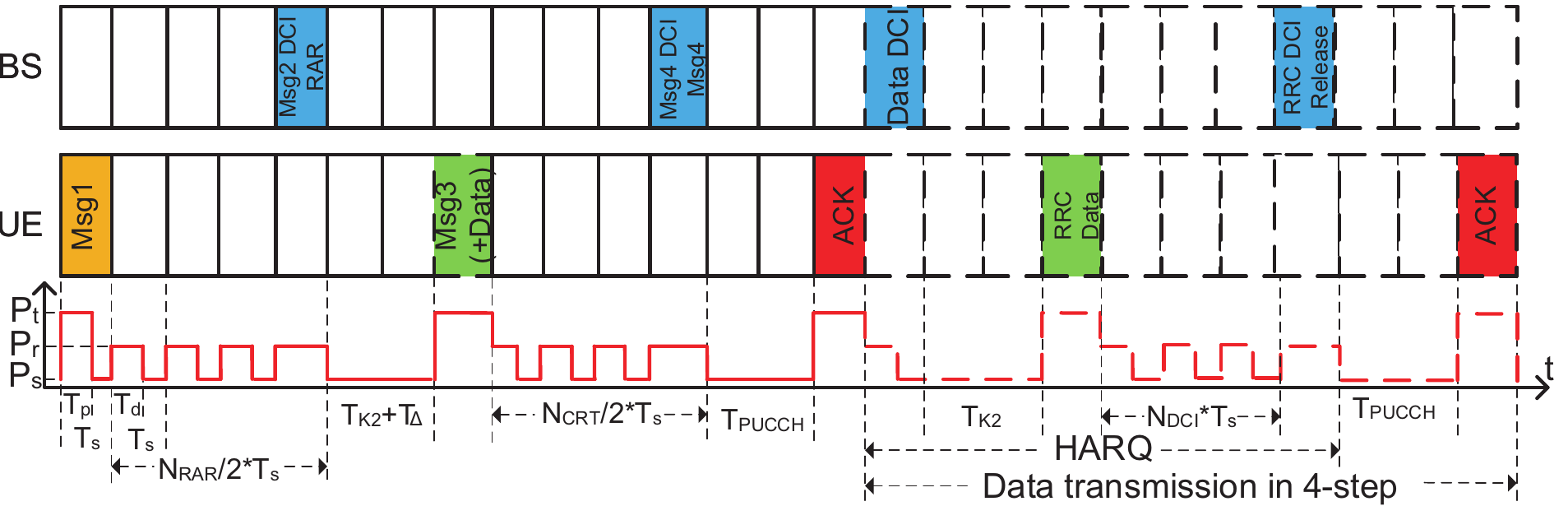}
		\caption{Timing relationship of successful 4-step and 4-step SDT RA procedure.}
		\label{fig:Timing relationship of 4-step and NR EDT RA schemes}
	\end{figure*}
	
	\subsubsection{Successful 4-step RA procedure} 
	The successful 4-step RA procedure timing relationship is illustrated in Fig.~\ref{fig:Timing relationship of 4-step and NR EDT RA schemes}, which occurs when both preamble detection and PUSCH decoding are successful.

	The device transmits Msg1 (i.e., preamble) in one slot $ \mathrm{T_{s}} $. Since the preamble transmission time only lasts for $ \mathrm{T_{p}} $, the device is in sleep state in the rest of time $ \mathrm{T_{s}-T_{p}} $. Therefore, the energy consumption of step 1 in 4-step RA scheme can be written as
	\begin{equation}
	E_{\mathrm{p}} = \mathrm{P_{t}T_{p}}+\mathrm{P_{s}\Big( T_{s}-T_{p}\Big) }.
	\end{equation}
	
	After transmitting the Msg1, the device starts to monitor PDCCH containing DCI continuously. The device monitors the PDCCH in each slot for $ \mathrm{T_{d}} $, and then stays in sleep state for the rest of the time $ \mathrm{T_{s}-T_{d}}  $. We assume the Msg2 is scheduled following PDCCH in the same slot, and the Msg2 from the BS arrives at the half duration of the RAR window in the successful case (i.e., $ \frac{\mathrm{N_{RAR}}}{2}*\mathrm{T_{s}} $). After receiving the Msg2, the device needs to wait for $ \mathrm{T_{K2}+T_{\triangle}}  $ before transmitting the Msg3 as Fig.~\ref{fig:Timing relationship of 4-step and NR EDT RA schemes}. Therefore, the energy consumption of step 2 in 4-step RA scheme when preamble detection succeeds can be written as
	\begin{equation}
	\begin{aligned}
	&E_{\mathrm{Msg2s}} = \Big( {\mathrm{N_{RAR}}}/{2}-1 \Big)\times \\&\bigg( \mathrm{P_{r}T_{d}}+\mathrm{P_{s}\Big( T_{s}-T_{d}\Big) }\bigg) +\mathrm{P_{r}T_{s}}  +\mathrm{P_{s}\Big( T_{K2}+T_{\triangle}\Big) }.
	\end{aligned}
	\end{equation}
	
	Then, the energy consumption of transmitting Msg3 at step 3 in 4-step RA scheme can be written as
	\begin{equation}
	E_{\mathrm{Msg3}} = \mathrm{P_{t}T_{s}}.
	\end{equation}
	
	After transmitting the Msg3, the device starts the CRT and monitors the PDCCH for Msg4 DCI. We assume the Msg4 from the BS arrives at the half duration of the CRT in the successful case (i.e., $ \frac{\mathrm{N_{CRT}}}{2}*\mathrm{T_{s}} $). The device that succeeds in the contention receives the Msg4, and then transmits the ACK after $ \mathrm{T_{PUCCH}} $ as Fig.~\ref{fig:Timing relationship of 4-step and NR EDT RA schemes}. Therefore, the energy consumption of step 4 in the 4-step RA scheme when PUSCH decoding succeeds can be written as
	
	\begin{equation}
	\begin{aligned}
	&E_{\mathrm{Msg4s}} =\Big( {\mathrm{N_{CRT}}}/{2}-1\Big)\times\\&\bigg( \mathrm{P_{r}T_{d}}+\mathrm{P_{s}\Big( T_{s}-T_{d}\Big)  }\bigg)+\mathrm{P_{r}T_{s}}+\mathrm{T_{{PUCCH}}}\mathrm{P_{s}}+\mathrm{P_{t}T_{s}}.
	\end{aligned}
	\end{equation}

	The HARQ is applied to data transmission after successful 4-step RA procedure (marked with dash line in Fig.~\ref{fig:Timing relationship of 4-step and NR EDT RA schemes}) to guarantee the reliability. The device first receives the data DCI, which lasts for $ \mathrm{T_{d}} $. Then, the device transmits data after $ \mathrm{T_{K2}} $, and starts to monitor the RRC DCI. We assume the RRC DCI arrives after $\mathrm{N_{DCI}}$ time slots, which contains the new data indicator (NDI) to indicate the data transmission failure or success.\footnote{The $\mathrm{N_{DCI}}$ time slots are the gap between the uplink data transmission and downlink RRC DCI, which consists of decoding latency and L2L1 processing delay \cite{Patriciello2019}.}  If the data transmission succeeds, the device transmits ACK after $ \mathrm{T_{PUCCH}} $. Therefore, the energy consumption when the data HARQ completes in $ k $th transmissions can be written as
	\begin{equation}
	\begin{aligned}
	&E_{\mathrm{data}}^k = k\bigg[\mathrm{P_{r}T_{d}}+\mathrm{P_{s}\Big( T_{s}-T_{d}\Big)  }  +\mathrm{T_{K2}P_{s}}+\\&\mathrm{P_{t}T_{s}}+\mathrm{N_{DCI}}\bigg( \mathrm{P_{r}T_{d}}+\mathrm{P_{s}\Big( T_{s}-T_{d}\Big)  }\bigg)+\mathrm{P_{r}T_{s}} \bigg]   +\\&\mathrm{T_{{PUCCH}}}\mathrm{P_{s}}+\mathrm{P_{t}T_{s}}.
	\end{aligned}
	\end{equation}
	
	The average data transmission energy consumption after successful 4-step RA procedure is derived as
	\begin{equation}
	E_{\mathrm{data}}^{\mathrm{HARQ}}=\sum_{k=1}^{\mathrm{K-1}}(1-B) B^{k-1}E_{\mathrm{data}}^{k}+ B^{\mathrm{K}-1}E_{\mathrm{data}}^{\mathrm{K}},
	\end{equation}
	where $ \mathrm{K} $ is the maximum data retransmission times.

	\subsubsection{Preamble Detection Failure}
	The energy consumption of step 1 is the same as that of the successful 4-step RA procedure. If preamble detection fails at the step 1 of 4-step RA scheme, the device will not receive the Msg2, which means the device monitors PDCCH until the end of RAR window and waits for the next RACH opportunity. Hence, the energy consumption of step 2 in 4-step RA scheme when preamble detection fails can be written as
	\begin{equation}
	E_{\mathrm{Msg2f}} = \mathrm{N_{RAR}}\bigg( \mathrm{P_{r}T_{d}}+\mathrm{P_{s}\Big( T_{s}-T_{d}\Big)  }\bigg).
	\end{equation}
	
	\subsubsection{PUSCH Decoding Failure}
	 The energy consumption of step 1, 2, and 3 are the same as that of the successful 4-step RA procedure. If PUSCH decoding fails at the step 3 of the 4-step RA scheme, the device will not receive Msg4, which means the device monitors PDCCH until the end of CRT and waits for the next RACH opportunity. Hence, the energy consumption of step 4 in 4-step RA scheme when PUSCH decoding fails can be written as
	\begin{equation}
	E_{\mathrm{Msg4f}}= \mathrm{N_{CRT}}\bigg( \mathrm{P_{r}T_{d}}+\mathrm{P_{s}\Big( T_{s}-T_{d}\Big)  }\bigg).
	\end{equation}

	Finally, the average energy consumption of a typical device in $ m $th RACH attempt of 4-step RA scheme is derived as
	\begin{equation}
	\begin{aligned}
	\begin{split}
	&E^{m}_{\mathrm{4step}}=\mathcal{T}^{m}_{\mathrm{4step}}\biggl\{\underbracea{\left(  1-\sum_{n=0}^{\infty}\mathbb{P}\left[N=n\right] {\cal P}_{{\rm{pre}|\mathcal{A}}}^m\right) \Big(E_{\mathrm{p}}+  }_{\mathrm{I}}\\&\underbraced{E_{\mathrm{Msg2f}}\Big)}_{\mathrm{I}}+ \underbracea{\sum_{n=0}^{\infty}\mathbb{P}\left[N=n \right] {\cal P}_{{\rm{pre}|\mathcal{A}}}^m\left(1-{\cal P}_{{\rm{pus}|\mathcal{B}}}^{m}\right)\Big(E_{\mathrm{p}}+ }_{\mathrm{II}}\\&\underbraced{E_{\mathrm{Msg2s}}+ E_{\mathrm{Msg3}}  +E_{\mathrm{Msg4f}}\Big) }_{\mathrm{II}} +\underbracea{ \sum_{n=0}^{\infty}\mathbb{P}\left[N=n \right] {\cal P}_{{\rm{pre}|\mathcal{A}}}^m{\cal P}_{{\rm{pus}|\mathcal{B}}}^{m}}_{\mathrm{III}}\\&\underbraced{\times\Big( E_{\mathrm{p}}+ E_{\mathrm{Msg2s}}+E_{\mathrm{Msg3}}+  E_{\mathrm{Msg4s}}+ E_{\mathrm{data}}^{\mathrm{HARQ}}\Big)}_{\mathrm{III}}\biggr\},
	\end{split}
	\label{eq:overall_e_4step}
	\end{aligned}
	\end{equation}
	where term I, II, and III are the energy consumption of 4-step RA scheme under preamble detection failure, PUSCH decoding failure, and successful 4-step RA procedure, $ \mathcal{T}^{m}_{\mathrm{4step}} $ is the non-empty probability of a typical device in $ m $th RACH attempt, $ \mathcal{P}^{m}_{\mathrm{pre|A}} $ is the preamble detection success probability given in \textbf{Lemma 1}, $ \mathcal{P}^{m}_{\mathrm{pus|B}} $ is the PUSCH decoding success probability given in \textbf{Lemma 2}.

	\subsection{4-step SDT Random Access}
	As shown in Fig.~\ref{fig:Timing relationship of 4-step and NR EDT RA schemes}, 4-step SDT RA scheme grants a larger PUSCH resource for Msg3 transmission along with the data (marked with dash line). Similar as 4-step RA scheme, the energy consumption of 4-step SDT also depends on the success or failure of preamble detection and PUSCH decoding. The difference is that the data is transmitted along with Msg3 without HARQ, whose energy consumption is $  E_{\mathrm{data}}= \mathrm{P_{t}T_{s}}$. Hence, the average energy consumption of a typical device in $ m $th RACH attempt in 4-step SDT RA scheme is derived as
	\begin{equation}
	\begin{split}
	&E^{m}_{\mathrm{4-stepSDT}}=\mathcal{T}^{m}_{\mathrm{4-stepSDT}}\biggl\{\underbracea{\left(  1-\sum_{n=0}^{\infty}\mathbb{P}\left[N=n\right] {\cal P}_{{\rm{pre}|\mathcal{A}}}^m\right)\times }_{\mathrm{I}}\\&\underbraced{\Big(E_{\mathrm{p}}+  E_{\mathrm{Msg2f}}\Big)}_{\mathrm{I}}+ \underbracea{\sum_{n=0}^{\infty}\mathbb{P}\left[N=n \right] {\cal P}_{{\rm{pre}|\mathcal{A}}}^m\left(1-{\cal P}_{{\rm{pus}|\mathcal{B}}}^{m}\right)\times}_{\mathrm{II}}\\&\underbraced{\Big(  E_{\mathrm{p}}+ E_{\mathrm{Msg2s}}+ E_{\mathrm{Msg3}}+E_{\mathrm{data}}+E_{\mathrm{Msg4f}}\Big) }_{\mathrm{II}} +\underbracea{ \sum_{n=0}^{\infty}\mathbb{P}\left[N=n \right]}_{\mathrm{III}}\\&\underbraced{\times{\cal P}_{{\rm{pre}|\mathcal{A}}}^m{\cal P}_{{\rm{pus}|\mathcal{B}}}^{m}\Big( E_{\mathrm{p}}+ E_{\mathrm{Msg2s}}+ E_{\mathrm{Msg3}}+ E_{\mathrm{data}}+  E_{\mathrm{Msg4s}}\Big)}_{\mathrm{III}}\biggr\},
	\end{split}
	\label{eq:overall_e_EDT}
	\end{equation}
	where term I, II, and III are the average energy consumption of 4-step SDT under preamble detection failure, PUSCH decoding failure, and successful RA procedure. Since data is transmitted along with Msg3, both term II and term III include data energy consumption $  E_{\mathrm{data}}$.

	\subsection{2-step Random Access}
 
	The average energy consumption of 2-step RA scheme depends on the success or failure of preamble detection and PUSCH decoding in MsgA, and PUSCH decoding after fallback. Therefore, we derive the energy consumption of each message in 2-step RA scheme with successful 2-step RA procedure, preamble detection failure, PUSCH decoding failure, respectively.
		\begin{figure*}[htbp!]
		\centering
		\includegraphics[scale=0.8]{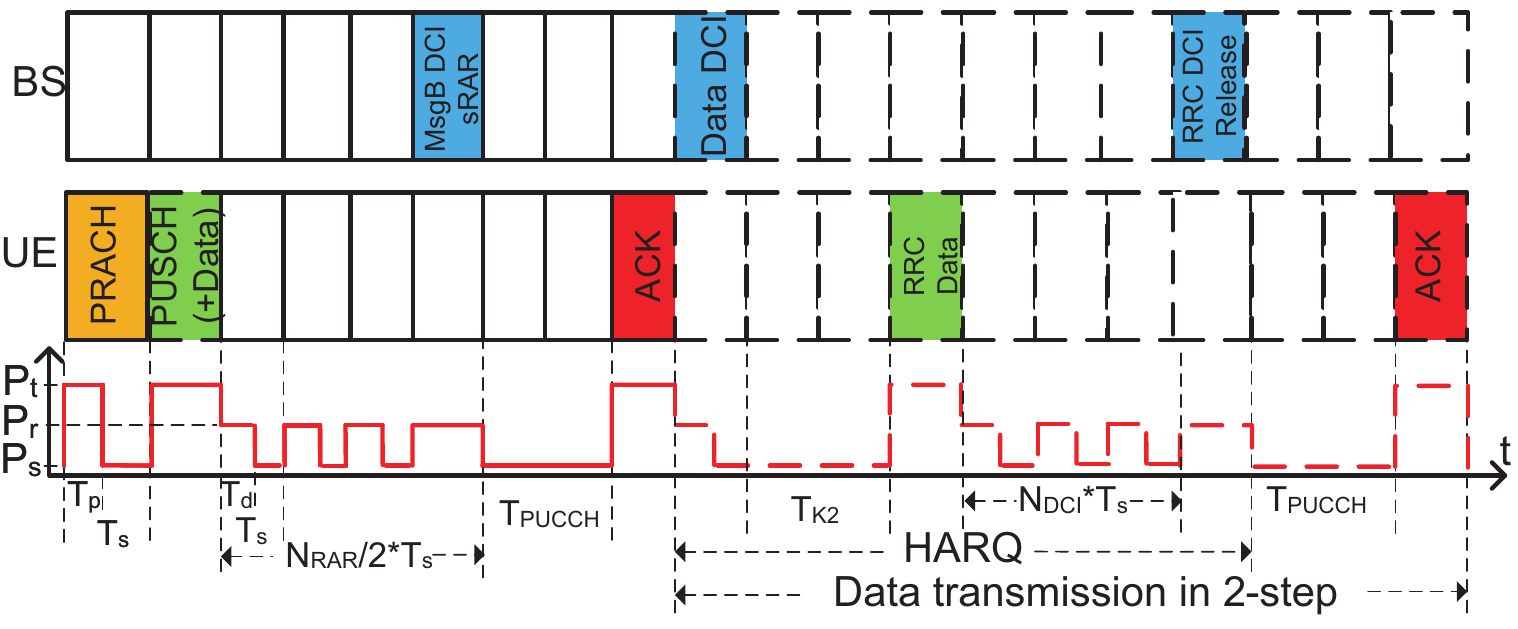}
		\caption{Timing relationship of successful 2-step and 2-step SDT RA procedure.}
		\label{fig:Timing relationship of 2-step A and 2-step B RA schemes}
	\end{figure*}
	\subsubsection{Successful 2-step RA Procedure}
	The timing relationship of successful 2-step RA procedure is illustrated in Fig.~\ref{fig:Timing relationship of 2-step A and 2-step B RA schemes}. The preamble transmission lasts for $ \mathrm{T_p} $ during $ \mathrm{T_s} $, and PUSCH transmission lasts for $ \mathrm{T_s} $. Hence, we derive the energy consumption of step 1 in 2-step RA scheme as
	\begin{equation}
	E_{\mathrm{MsgA}} =\mathrm{P_{t}T_{p}}+\mathrm{P_{s}\Big( T_{s}-T_{p}\Big) } +\mathrm{P_{t}T_{s}}.
	\end{equation}

	After transmitting the MsgA, the device starts to monitor PDCCH containing DCI continuously for $ \mathrm{T_{d}} $ in each slot. We assume the MsgB from the BS arrives at the half duration of the RAR window in the successful case (i.e., $ \frac{\mathrm{N_{RAR}}}{2}*\mathrm{T_{s}} $). The device that succeeds in the contention receives the MsgB, and then transmits the ACK after $ \mathrm{T_{PUCCH}} $ as Fig.~\ref{fig:Timing relationship of 2-step A and 2-step B RA schemes}. Therefore, the energy consumption of step 2 in 2-step RA scheme with a successful MsgB is derived as
	\begin{equation}
	\begin{aligned}
	&E_{\mathrm{MsgBs}} =\Big( {\mathrm{N_{RAR}}}/{2}-1\Big)\times\\&\bigg( \mathrm{P_{r}T_{d}}+\mathrm{P_{s}\Big( T_{s}-T_{d}\Big)  }\bigg)+\mathrm{P_{r}T_{s}}+\mathrm{T_{{PUCCH}}}\mathrm{P_{s}}+\mathrm{P_{t}T_{s}}.
	\end{aligned}
	\end{equation}.
	
	\subsubsection{Preamble Detection Failure}
	The energy consumption of step 1 is the same as that of the successful 2-step RA procedure. If preamble detection fails at the step 1 of the 2-step RA scheme, the device will not receive MsgB, which means the device monitors PDCCH until the end of RAR window and waits for the next RACH opportunity. Hence, the energy consumption of step 2 in 2-step RA scheme when preamble detection fails can be written as
	\begin{equation}
	E_{\mathrm{MsgBf}} = \mathrm{N_{RAR}}\bigg( \mathrm{P_{r}T_{dci}}+\mathrm{P_{s}\Big( T_{s}-T_{d}\Big)  }\bigg).
	\end{equation}
		
	\subsubsection{PUSCH Decoding Failure}
	The energy consumption of step 1 is the same as that of the successful 2-step RA procedure. If the BS successfully detects the preamble in MsgA but fails to decode the PUSCH at the step 1 of the 2-step RA scheme, the device will receive a fallback MsgB with grant for Msg3 transmission following 4-step RA procedure as Fig.~\ref{fig:Timing relationship of 4-step and NR EDT RA schemes}. We assume the MsgB from the BS arrives at the half duration of the RAR window, and the device transmits Msg3 after $ \mathrm{T_{K2}+T_{\triangle}} $. Hence, we derive the energy consumption of step 2 in 2-step RA scheme when fallback mechanism is satisfied as
	\begin{equation}
	\begin{aligned}
	&E_{\mathrm{MsgBfb}} = \Big( {\mathrm{N_{RAR}}}/{2}-1\Big)\times\\& \bigg( \mathrm{P_{r}T_{d}}+\mathrm{P_{s}\Big( T_{s}-T_{d}\Big)  }\bigg) +\mathrm{P_{r}T_{s}} +\mathrm{P_{s}\Big( T_{K2}+T_{\triangle}\Big) }.
	\end{aligned}
	\end{equation}
	
	Since the data can be transmitted after successful 2-step or successful fallback 4-step RA procedure with HARQ, the average energy consumption of a typical device in $ m $th RACH attempt in 2-step RA scheme is derived as
	\begin{equation}
	\begin{split}
	&E^{m}_{\mathrm{2step}}=\mathcal{T}^{m}_{\mathrm{2step}}\biggl\{\underbracea{\sum_{n=0}^{\infty}\mathbb{P}\left[N=n \right] {\cal P}_{{\rm{pre}|\mathcal{A}}}^m{\cal P}_{{\rm{pus}|\mathcal{B}}}^{m} \Big( E_{\mathrm{MsgA}}+ }_{\mathrm{I}}\\&\underbraced{E_{\mathrm{MsgBs}}+E_{\mathrm{data}}^{\mathrm{HARQ}}\Big)}_{\mathrm{I}} +\underbracea{\bigg( 1-\sum_{n=0}^{\infty}\mathbb{P}\left[N=n\right] {\cal P}_{{\rm{pre}|\mathcal{A}}}^m\Big( 1-n\times}_{\mathrm{II}}\\&\underbraced{{\cal P}_{{\rm{pus}|\mathcal{B}}}^m\Big)\bigg)\left( E_{\mathrm{MsgA}}+ E_{\mathrm{MsgBf}}\right)}_{\mathrm{II}} +\underbracea{\sum_{n=0}^{\infty}\mathbb{P}\left[N=n\right] {\cal P}_{{\rm{pre}|\mathcal{A}}}^m\Big( 1-}_{\mathrm{III}}\\&\underbracec{\left(n+1 \right) {\cal P}_{{\rm{pus}|\mathcal{B}}}^m\Big){\cal P}_{{\rm{pus}|\mathcal{B}}}^m\Big( E_{\mathrm{MsgA}}+ E_{\mathrm{MsgBfb}}+E_{\mathrm{Msg3}}+ }_{\mathrm{III}}\\&\underbraced{E_{\mathrm{Msg4s}}+ E_{\mathrm{data}}^{\mathrm{HARQ}}\Big)}_{\mathrm{III}}+\underbracea{\sum_{n=0}^{\infty}\mathbb{P}\left[N=n\right] {\cal P}_{{\rm{pre}|\mathcal{A}}}^m\Big( 1-\left(n+1 \right)\times}_{\mathrm{IV}}\\&\underbracec{{\cal P}_{{\rm{pus}|\mathcal{B}}}^m  \Big)\left(1- {\cal P}_{{\rm{pus}|\mathcal{B}}}^m\right) \Big( E_{\mathrm{MsgA}}+ E_{\mathrm{MsgBfb}}+E_{\mathrm{Msg3}} + }_{\mathrm{IV}}\\&\underbraced{E_{\mathrm{Msg4f}}\Big)}_{\mathrm{IV}} \biggr\},
	\end{split}
	\label{eq:overall_e_2stepB}
	\end{equation}
	where term I, II, III, IV correspond to the successful 2-step RA procedure, preamble detection failure, successful fallback 4-step RA procedure, and PUSCH decoding failure in fallback 4-step RA procedure, respectively.

		\subsection{2-step SDT Random Access}

	Similar as the 2-step RA scheme, the energy consumption of 2-step SDT RA scheme also depends on the success or failure of preamble detection and PUSCH decoding. The difference is that the data is transmitted along with MsgA without HARQ, whose energy consumption is $  E_{\mathrm{data}}= \mathrm{P_{t}T_{s}}$. Hence, the average energy consumption of a typical device in $ m $th RACH attempt in 2-step SDT RA scheme is derived as 
		
	\begin{equation}
	\begin{split}
	&E^{m}_{\mathrm{2stepSDT}}=\mathcal{T}^{m}_{\mathrm{2stepSDT}}\biggl\{\underbracea{\sum_{n=0}^{\infty}\mathbb{P}\left[N=n \right] {\cal P}_{{\rm{pre}|\mathcal{A}}}^m{\cal P}_{{\rm{pus}|\mathcal{B}}}^{m} \Big(  E_{\mathrm{MsgA}}}_{\mathrm{I}}\\&\underbraced{+E_{\mathrm{data}}+ E_{\mathrm{MsgBs}}\Big)}_{\mathrm{I}} +\underbrace{\bigg( 1-\sum_{n=0}^{\infty}\mathbb{P}\left[N=n\right] {\cal P}_{{\rm{pre}|\mathcal{A}}}^m\Big( 1-n\times }_{\mathrm{II}}\\&\underbraced{{\cal P}_{{\rm{pus}|\mathcal{B}}}^m\Big)\bigg)\Big( E_{\mathrm{MsgA}}+E_{\mathrm{data}}+ E_{\mathrm{MsgBf}}\Big)}_{\mathrm{II}} +\underbracea{\sum_{n=0}^{\infty}\mathbb{P}\left[N=n\right]\times }_{\mathrm{III}}\\&\underbracec{{\cal P}_{{\rm{pre}|\mathcal{A}}}^m\left( 1-\left(n+1 \right) {\cal P}_{{\rm{pus}|\mathcal{B}}}^m\right){\cal P}_{{\rm{pus}|\mathcal{B}}}^m\Big( E_{\mathrm{MsgA}}+E_{\mathrm{data}}+ }_{\mathrm{III}}\\&\underbraced{E_{\mathrm{MsgBfb}}+E_{\mathrm{Msg3}} +E_{\mathrm{Msg4s}}+ E_{\mathrm{data}}^{\mathrm{HARQ}}\Big)}_{\mathrm{III}}+\underbracea{
		\sum_{n=0}^{\infty}\mathbb{P}\left[N=n\right]\times  }_{\mathrm{IV}}\\&\underbracec{{\cal P}_{{\rm{pre}|\mathcal{A}}}^m\left( 1-\left(n+1 \right) {\cal P}_{{\rm{pus}|\mathcal{B}}}^m\right)\left(1- {\cal P}_{{\rm{pus}|\mathcal{B}}}^m\right)\Big( E_{\mathrm{MsgA}}+E_{\mathrm{data}}}_{\mathrm{IV}}\\&\underbraced{+ E_{\mathrm{MsgBfb}}+E_{\mathrm{Msg3}} +E_{\mathrm{Msg4f}}\Big)}_{\mathrm{IV}} \biggr\},
	\label{eq:overall_e_2stepA}
	\end{split}
	\end{equation}
	where term I, II, III, IV correspond to the successful 2-step SDT RA procedure, preamble detection failure, successful fallback 4-step RA procedure, and PUSCH decoding failure in fallback 4-step RA procedure.

	\subsection{Average Energy Consumption for Each Packet Transmission}
	We have derived the average energy consumption of each scheme in $ m $th RACH attempt, however, the average energy consumption for a successful packet transmission in each scheme is a more important metric. Therefore, We derive the average energy consumption of a successful packet transmission as
	\begin{equation}
	\begin{aligned}
	\mathbb{E}\left[E^{m}\right]={\sum_{t=1}^{m}E^{m}}/{\sum_{t=1}^{m}\mathcal{T}^{m}{\cal P}^m},
	\end{aligned}
	\label{eq:average energy consumption}
	\end{equation}
	where $ E^{m} $ is the average energy consumption in $ m $th RACH attempt of each scheme, which has been derived for each scheme in \eqref{eq:overall_e_4step}, \eqref{eq:overall_e_EDT}, \eqref{eq:overall_e_2stepB}, \eqref{eq:overall_e_2stepA}.

	\section{Simulation and Discussion}
	In this section, we validate our analysis above via independent system level simulations based on Monte Carlo method and the queue evolution over consecutive time slots. As mentioned in Section II, the devices are deployed via independent PPPs in a 0.1 $ \mathrm{km}^2 $ circle cell. We assume that the frequency band only impacts the slot duration. Each device employs the channel inversion power control, and the buffer at each device is simulated to capture the new packets arrival and accumulation process evolved along the time. The preamble detection, PUSCH decoding, and data transmission processes are jointly simulated to verify the packet transmission success probability, average energy consumption for a packet transmission, and throughput derived in section III and IV. Remind that the advanced receiver model enables the BS to decode the PUSCH signal with the highest SINR above the threshold, even in the presence of simultaneous transmission. We compare the results with the basic receiver in \cite{Jiang2018a}, where all transmissions are failed when the collision happens. In all figures of this section, we use “Ana.” and “Sim.” to abbreviate “Analytical” and “Simulation”, respectively. Unless otherwise stated, we set the same new packets arrival rate for each time slot as $ \mu^{m}_{\mathrm{New}} =0.1$ packets/time-slot, $ \rho=-90 $ dBm, $ \sigma_{\mathrm{n}}^{2} = -100.4 $ dBm, $ \gamma_{\mathrm{ th}}=-10 $ dB, $ \alpha=4 $, $ \mathrm{T_{p}}=(133 + 9.4)$ us , $ \mathrm{T_{slot}} =0.5$ ms, $  \mathrm{T_{dci}} =107.14$ us , $ \mathrm{N_{RAR}}=40 $ , $ \mathrm{N_{CRT}}=48 $. $ \mathrm{T_{K2}}=0.5 $ ms, $ \mathrm{T_{\triangle}} =1.5$ ms,  $ \mathrm{T_{PUCCH}}=964.29 $ us,  $ \mathrm{P_{s}} $=15 uW,  $ \mathrm{P_{r}}=80$ mW,  $ \mathrm{P_{t}}=500 $ mW, $ N_{\mathrm{ZC}}=839 $, $ \lambda _{{\rm{th}}} =-51.5$ dBm, $ \mathrm{K}=1 $, $ \mathrm{T_{{RACH}}} = 31.5$ms.
	
	\begin{figure}[htbp!]
				\centering
				\includegraphics[scale=0.35]{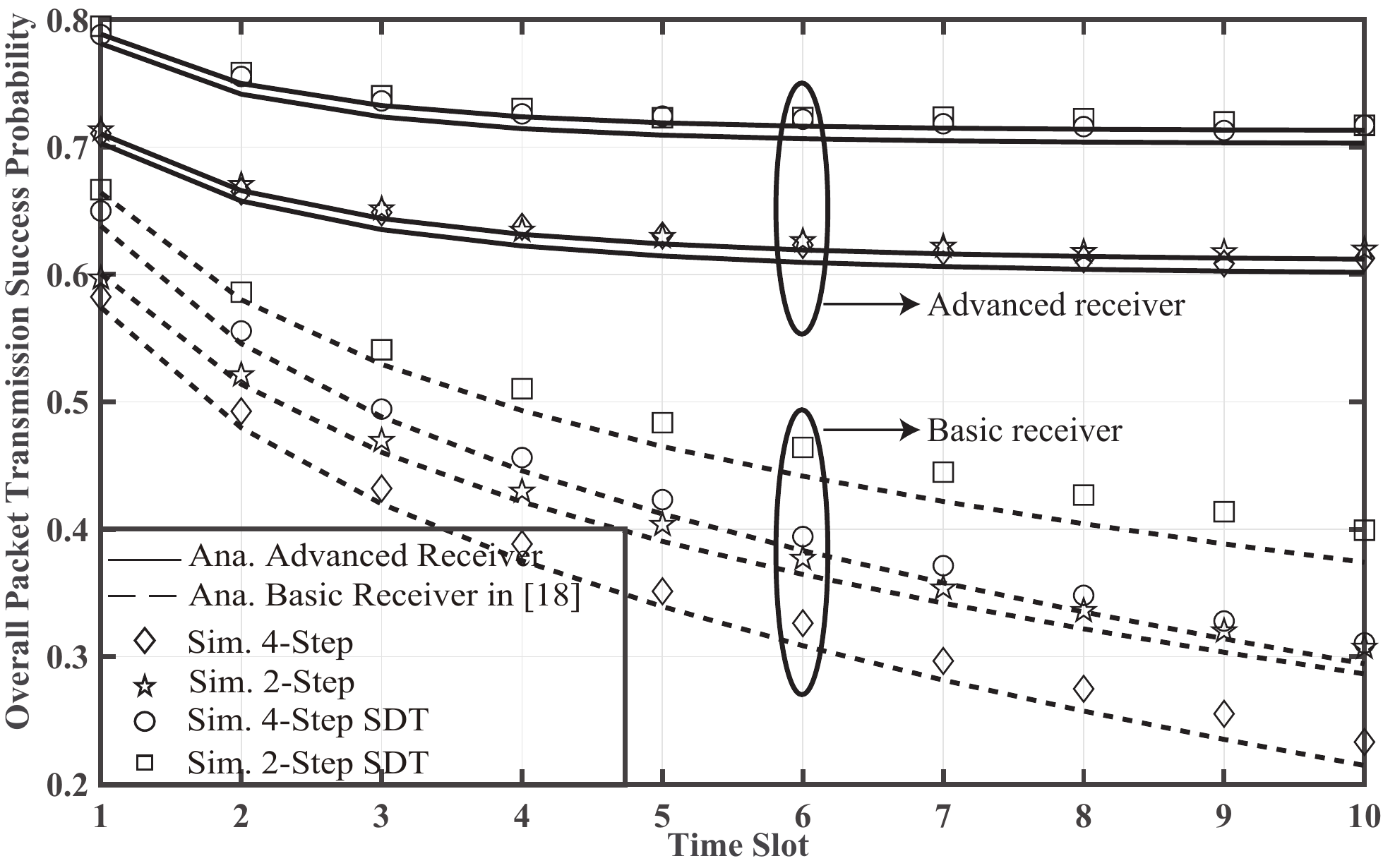}
				\caption{Packet transmission success probability in each time slot. We set the device intensity $ \lambda_{\mathrm{Dp}}= $5 devices/preamble}
				\label{fig: success probability in each time slot}
	\end{figure}
	
		\begin{figure}[htbp!]
			\centering
			\includegraphics[scale=0.35]{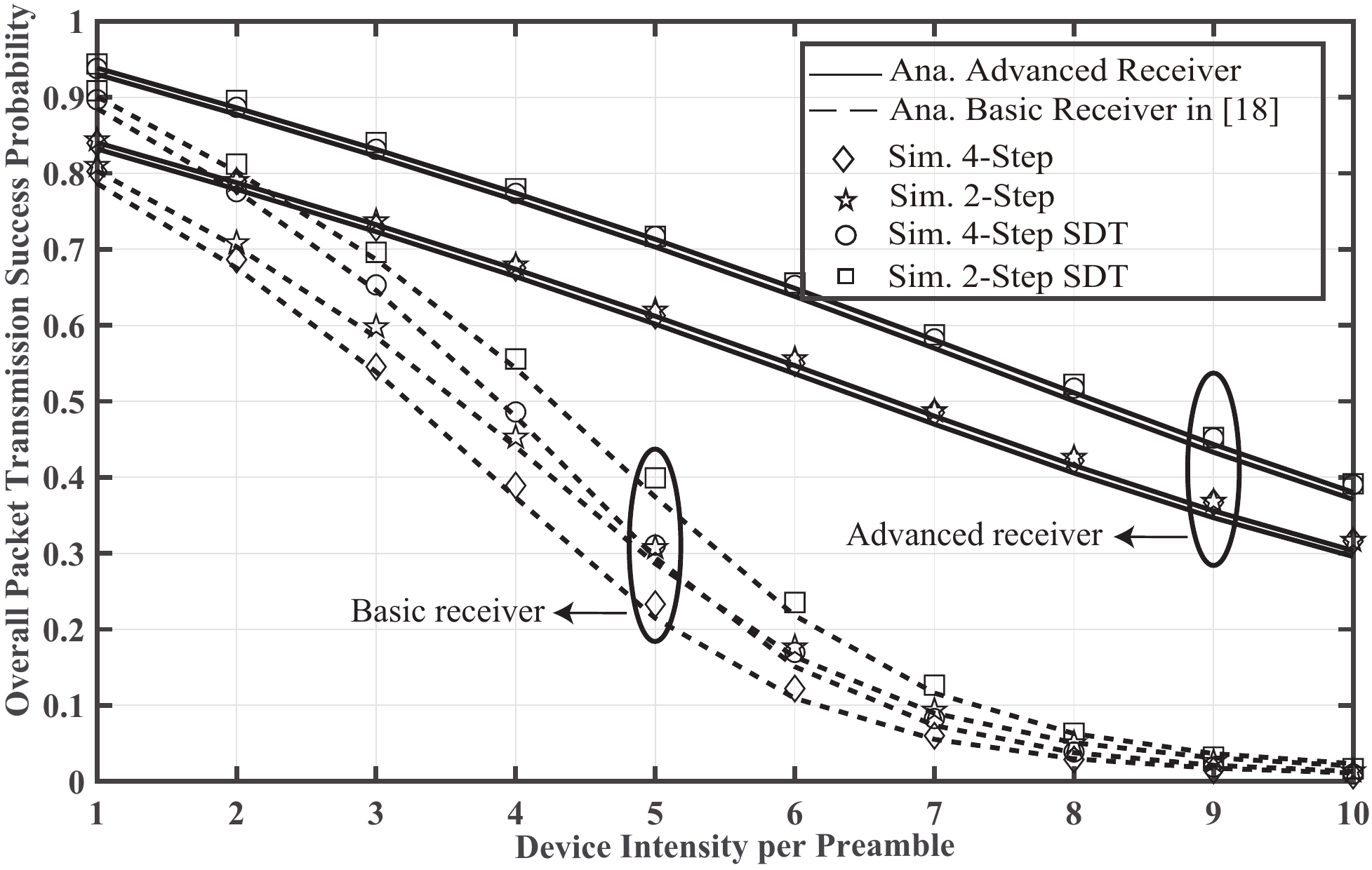}
			\caption{Packet transmission success probability in $ 10 $th time slot with different device intensity}
			\label{fig: Overall success probability}
				\end{figure}

	Fig.~\ref{fig: success probability in each time slot} plots the overall packet transmission success probability with four RA schemes versus time slot with $ \lambda_{\mathrm{Dp}}=5 $. The analytical curves of 4-step, 4-step SDT, 2-step, and 2-step SDT RA schemes are plotted using \eqref{eq:four step overall success probability}, \eqref{eq:EDT overall success probability}, \eqref{eq:2stepA_overall} and \eqref{eq:2stepB_overall}. The close match between the analytical curves and simulation points validates the accuracy of our developed spatio-temporal mathematical framework. In Fig.~\ref{fig: success probability in each time slot}, we see that the overall packet transmission success probability of each scheme enters into the stable region in the advanced receiver. However, the overall packet transmission success probability of each scheme keeps decreasing as time evolves in the basic receiver. This is because the BS can not decode the PUSCH if multiple colliding devices' SINR is higher than the threshold in the basic receiver. Therefore, the new arrival packets can not be transmitted to the BS in time, and leads to the traffic congestion.

	We also observe that the overall packet transmission success probabilities of 4/2-step SDT RA schemes are higher than that of 4/2-step RA schemes. This can be explained by the reason that the data is transmitted after RACH procedures in 4/2-step RA schemes, which decreases the overall packet transmission success probability. We also notice that the success probabilities of 2-step SDT RA scheme is higher than that of 4-step SDT RA scheme, and 2-step RA scheme is higher than that of 4-step RA scheme. This is because 2-step and 2-step SDT RA schemes integrate the fallback mechanism, which enables the packet to be successfully transmitted after fallback to 4-step RA scheme. This advantage is more obvious in the basic receiver due to higher fallback probability.

	Fig.~\ref{fig: Overall success probability} plots the overall packet transmission success probability with four RA schemes versus device intensity in $ 10 $th time slot.  In Fig.~\ref{fig: Overall success probability}, we observe that the overall packet transmission success probability of all schemes decreases with the increasing device intensity, due to the increasing aggregate interference from more devices transmitting signals simultaneously. We also observe that the decreasing rate is almost linear in the advanced receiver, and more sharply in the basic receiver due to more serious traffic congestion. 
	
	\begin{figure}[htbp!]
				\centering
		\includegraphics[scale=0.35]{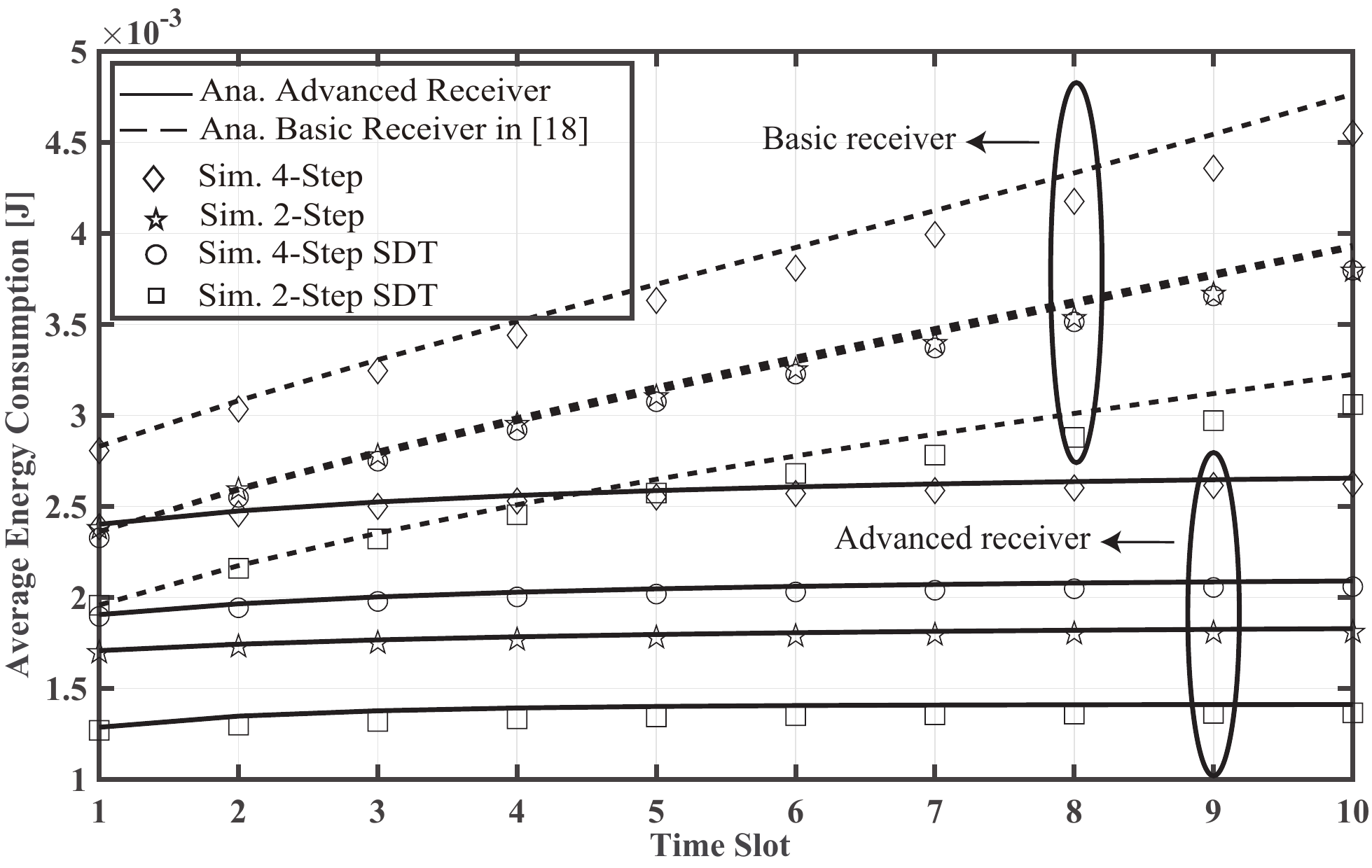}
\caption{Average energy consumption for packet transmission in each time slot. We set the device intensity $ \lambda_{\mathrm{Dp}}= $5 devices/preamble}
\label{fig:Energy consumption in each time slot}
	\end{figure}
	
		\begin{figure}[htbp!]
							\centering
			\includegraphics[scale=0.35]{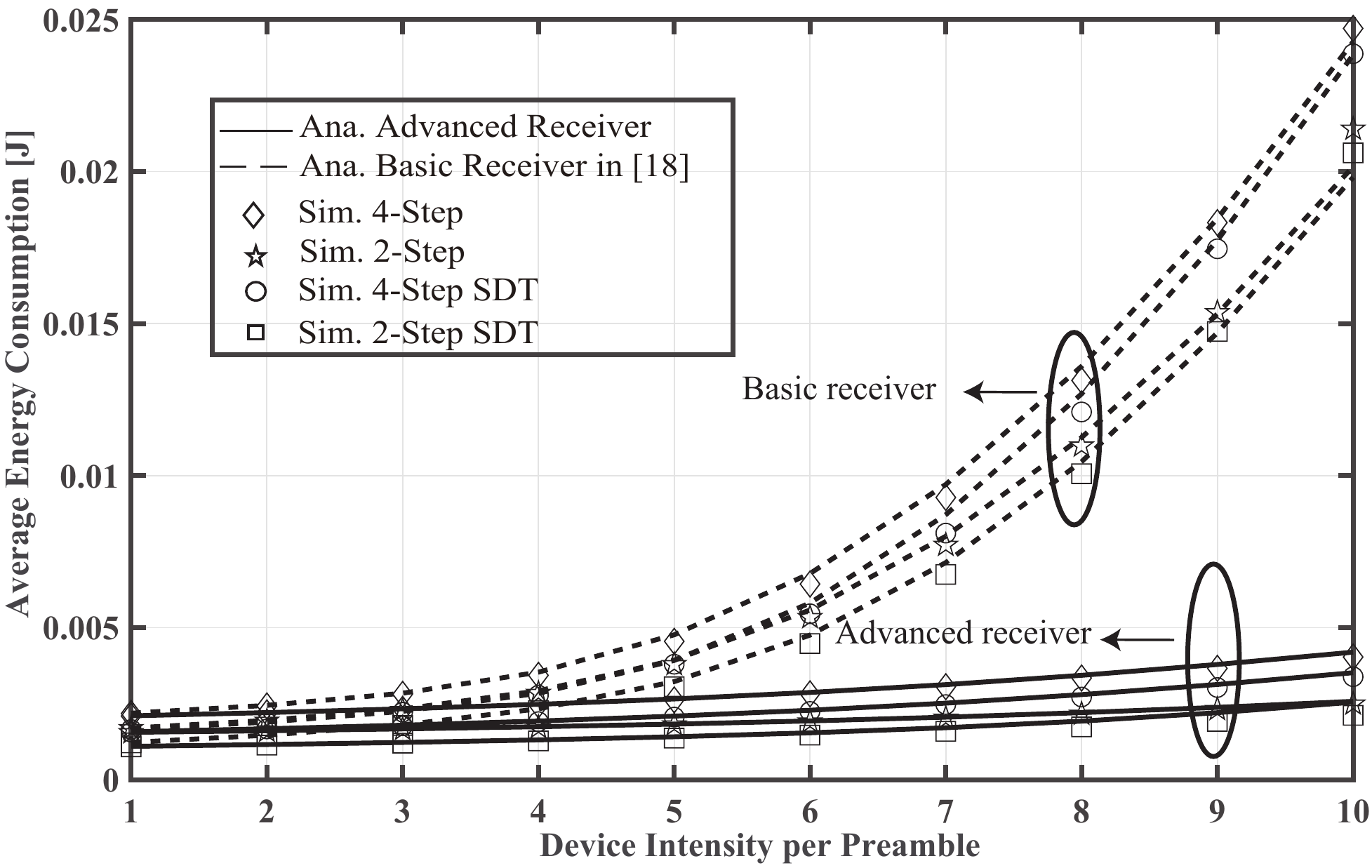}
			\caption{Average energy consumption for packet transmission in $ 10 $th time slot with different device intensity}
			\label{fig:Energy consumption}
				\end{figure}

	Fig.~\ref{fig:Energy consumption in each time slot} plots the average energy consumption for a successful packet transmission with four RA schemes versus time slot. The analytical curves of 4-step, 4-step SDT, 2-step and 2-step SDT RA schemes are plotted using \eqref{eq:overall_e_4step}, \eqref{eq:overall_e_EDT}, \eqref{eq:overall_e_2stepB}, \eqref{eq:overall_e_2stepA} and \eqref{eq:average energy consumption}. In Fig.~\ref{fig:Energy consumption in each time slot}, we observe that the average energy consumption for a successful packet transmission reaches a stable state in the advanced receiver, however, the average energy consumption keeps increasing with time-evolving in the basic receiver due to low overall packet transmission probability.

	Fig.~\ref{fig:Energy consumption} plots the average energy consumption for a successful packet transmission with four RA schemes versus device intensity. In Fig.~\ref{fig:Energy consumption}, we see that the average energy consumption follows 4-step $ > $ 4-step SDT $ > $ 2-step $ > $ 2-step SDT RA schemes. However, we can also notice that the gap between 2-step and 2-step SDT RA schemes is decreasing with increasing device intensity. This is because data is transmitted during RA procedure in 2-step SDT RA scheme, which leads to energy waste in high device intensity scenario.  We can also observe that the average energy consumption gap between GF (i.e., 2-step and 2-step SDT) and GB (i.e., 4-step and 4-step SDT) RA schemes increases with the increase of device intensity. The reason is that re-attempt using 4-step and 4-step SDT is more expensive in terms of energy consumption than a re-attempt using 2-step and 2-step SDT RA schemes due to the longer RA procedures.
	
		\begin{figure}[htbp!]
				\centering
		\includegraphics[scale=0.35]{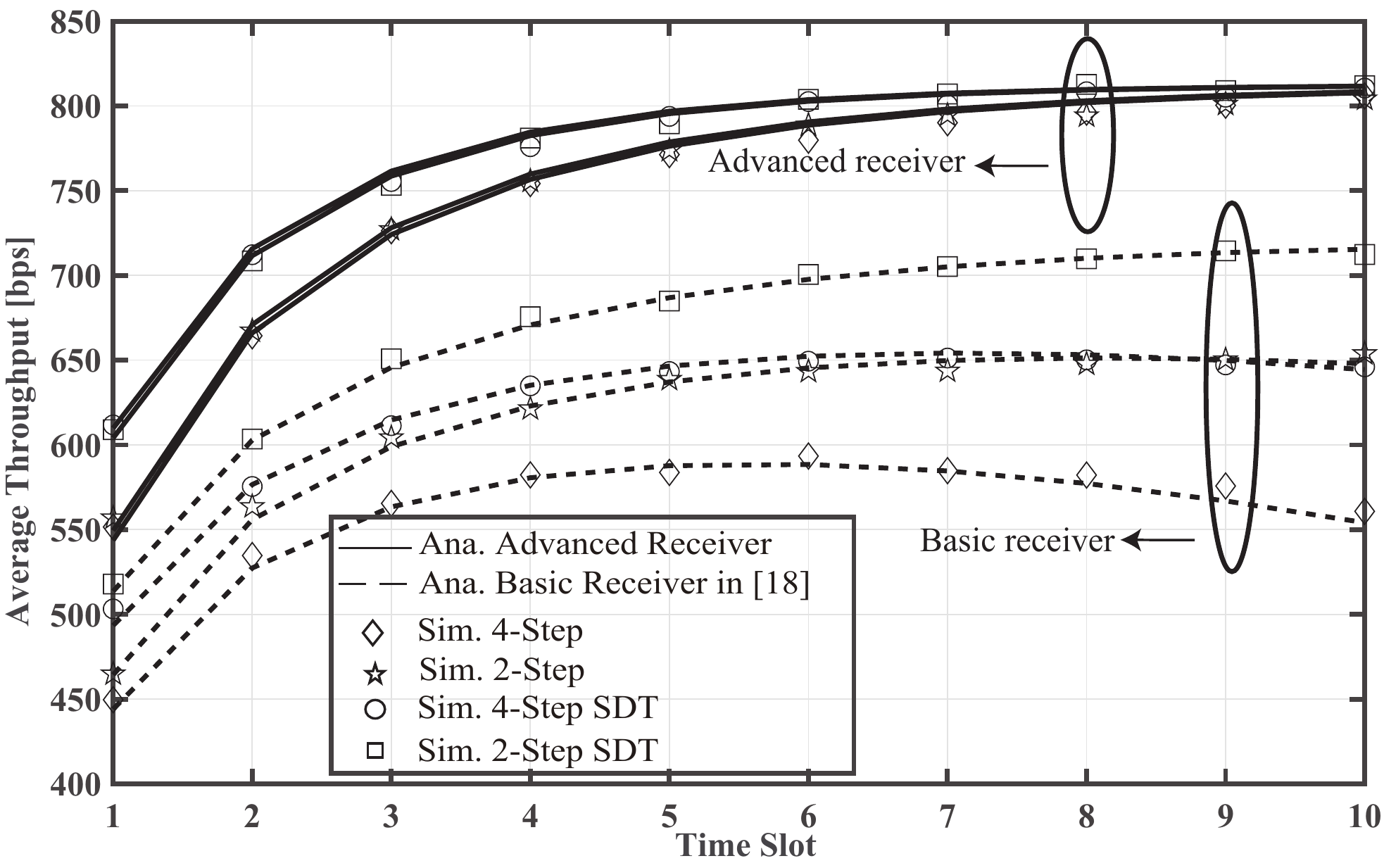}
\caption{Average throughput in each time slot. We set the device intensity $ \lambda_{\mathrm{Dp}}= $5 devices/preamble}
\label{fig:Average throughput in each time slot}
	\end{figure}
	
			\begin{figure}[htbp!]
								\centering
				\includegraphics[scale=0.35]{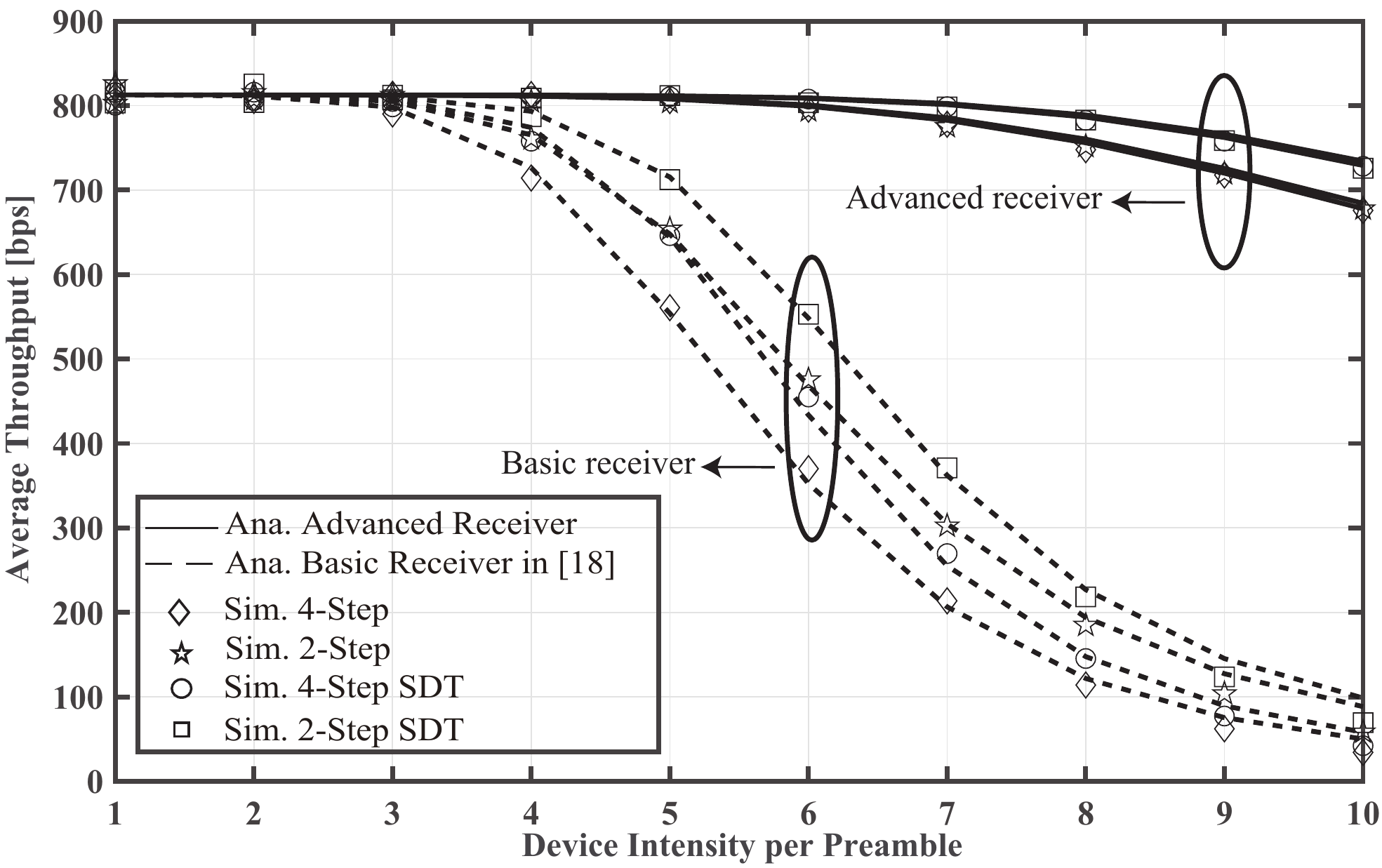}
				\caption{Throughput in $ 10 $th time slot with different device intensity}
				\label{fig:Throughput}
			\end{figure}

	Fig.~\ref{fig:Average throughput in each time slot} and Fig.~\ref{fig:Throughput} plot the throughput using \eqref{eq:throughput} with four RA schemes versus the time slot and the device intensity, respectively. In Fig.~\ref{fig:Average throughput in each time slot}, we can observe that the throughput reaches the highest value in the advanced receiver, which is determined by the new packets arrival rate during each RACH period. It also shows that the throughput of 4-step RA scheme in basic receiver rises and then decreases due to serious traffic congestion. In Fig.~\ref{fig:Throughput}, we observe that all schemes can deal with low intensity scenario effectively, and the average throughput in the basic receiver decreases more sharply. In the advanced receiver, the average throughput follows 2-step SDT $ = $ 4-step SDT$ > $ 2-step $ = $ 4-step RA schemes. In the basic receiver, the average throughput follows 2-step SDT $ > $ 2-step $ > $ 4-step SDT $ > $ 4-step RA schemes.
	\section{Conclusion}
	
	In this paper, we developed a spatio-temporal mathematical model to analyze the potential RA schemes for small data transmission. We first analyzed the preamble detection probability, PUSCH decoding probability, data transmission success probability, and overall packet transmission success probability of a typical device with 4/2-step and 4/2-step SDT RA schemes by modelling the queue evolution over consecutive time slots. We also derived average throughput for each scheme. We then provided the average energy consumption for each scheme based on the energy consumption of each message and transmission probability. Our numerical results have shown that the 2-step SDT RA scheme achieves the highest packet transmission success probability, and the lowest average energy consumption. However, the performance gain compared to the other RA schemes decreases with the increase of device intensity.

	\appendices
	\section{A Proof of Lemma1}

	We can calculate the Cumulative Distribution Function (CDF) of PDP peak value of a typical device based on \eqref{eq:power delay profile} as
		\begin{equation}
		\begin{aligned}
		&{\mathbb{P}[\mathrm{PDP}_{o}^{m}\left[ \tau_{o}\right] < {\lambda _{{\rm{th}}}}|\mathcal{A}]}\\&=\mathbb{P}\left[{ \left| {\sum\limits_{k = 0}^{{{N}_{\rm{ZC-1}}}} \left( {\sqrt {\rho }{h_0} } z_{r}^{i}\left[ k+\tau_{0}\right] + n_0\right) {z_{r}^{i}\left[ k+\tau_{0}\right]}^ * } \right|^2 < {\lambda _{{\rm{th}}}} | \mathcal{A}}\right] \\&= \mathbb{P}\left[ {\left| {{N_{\rm{ZC}}}\sqrt {\rho }{h_0}  + \sum\limits_{k = 0}^{{N_{\rm{ZC}}-1}} {\mathop {n_0}\limits^ \sim  } } \right|^2} < {\lambda _{{\rm{th}}}}|\mathcal{A}\right].
		\end{aligned}
		\end{equation}
		where the channel $h_{0} \sim \mathcal{CN}(0, 1)$, $\mathop {n_0}\limits^ \sim \sim \mathcal{CN}(0, \sigma_{n}^2)$, and $\left| {{N_{\rm{ZC}}}\sqrt {\rho }{h_0}  + \sum\limits_{k = 0}^{{N_{\rm{ZC}}-1}} {\mathop {n_0}\limits^ \sim  } } \right|^2$ follows the exponential distribution with scale $\rho N_{\rm{ZC}}^2 + \sigma_{n}^2 N_{\rm{ZC}}$. Therefore, the probability that PDP peak value of a typical device is below the detection threshold is derived as
		\begin{equation}
		\begin{aligned}
		{\mathbb{P}[\mathrm{PDP}_{j}^{m}\left[ \tau_{j}\right] < {\lambda _{{\rm{th}}}}|\mathcal{A}]}&=1-\mathrm{ exp}\left( -\dfrac{\lambda _{{\rm{th}}}}{\rho N_{\rm{ZC}}^2 + \sigma_{n}^2 N_{\rm{ZC}}}\right).
		\label{eq:PDP detecion probability}
		\end{aligned}
		\end{equation}

	Substituting \eqref{eq:PDP detecion probability} into \eqref{eq:preamble}, we can obtain the preamble detection success probability as \eqref{eq:preamble transmission success probability}.
	\section{A Proof of Lemma2}

	As the channel gain $|h|^2$ follows the exponential distribution with unit mean, the Complementary Cumulative Distribution Function (CCDF) of the maximum channel gain between $ n+1$ independent Rayleigh fading channel gains is derived as
	\begin{equation}
	F_{|h|^2_{max}}|_{N=n}\left( |h|^2 \right) =1-\left( 1-\mathrm{ exp}\left( -|h|^2 \right)  \right)^{n+1}.
	\label{eq:max_channel gain}
	\end{equation}
	
Therefore, the probability that highest SINR is above the threshold is derived as
		\begin{equation}
		\begin{aligned}
		&{\mathbb{P}[{\mathrm{SINR}}_{o} > {\gamma _{{\mathrm{th}}}}||h_{o}|^2>|h_{j}|^2 ,\mathcal{B}]}\\&= {\mathbb{E}_{{{\mathcal {I}}_{\mathrm{ intra}}}}\left\lbrace 1-\left( 1-\mathrm{ exp}\left\lbrace \dfrac{\gamma_{\rm{th}}}{\rho}\left( \sigma_{\mathrm{n}}^{2}+{{\mathcal {I}}_{\mathrm{ intra}}}\right) \right\rbrace \right)^{n+1}  \right\rbrace }.
		\end{aligned}
		\label{eq:probability_max_channel_gain}
		\end{equation}
	
	Substituting \eqref{eq:probability_max_channel_gain} into \eqref{eq:Msg3 capture effect model} and noting that $ \mathbb{P}[|h_{o}|^2>|h_{j}|^2|\mathcal{B}] = 1/(n+1)$, we obtain
	\begin{equation}
	\begin{aligned}
	&{\cal P}_{{\mathrm{pus}|\mathcal{B}}}^{m} = \\&\dfrac{{\mathbb{E}_{{{\mathcal {I}}_{\mathrm{ intra}}}}\left\lbrace 1-\left( 1-\mathrm{ exp}\left\lbrace \dfrac{\gamma_{\rm{th}}}{\rho}\left( \sigma_{\mathrm{n}}^{2}+{{\mathcal {I}}_{\mathrm{ intra}}}\right) \right\rbrace \right)^{n+1}  \right\rbrace }}{{\left( n+1\right) }}.
	\end{aligned}
	\label{eq:capture initial result}
	\end{equation}
	
	Because of the independency of the PPP in different regions and after applying the binomial expansion for the numerator of \eqref{eq:capture initial result}, we obtain
	\begin{equation}
	\begin{aligned}
	&{\cal P}_{{\mathrm{pus}|\mathcal{B}}}^{m} =\\&\dfrac{{\sum_{k=1}^{n+1}\tbinom{n+1}{k}(-1)^{k+1}\mathrm{ exp}\left\lbrace \dfrac{-k\gamma_{\rm{th}}\sigma_{\mathrm{n}}^{2}}{\rho}\right\rbrace {{\mathcal {L}}_{{{\mathcal {I}}_{\mathrm{ intra}}}}}\left({\frac {{\gamma _{\mathrm{ th}}} }{\rho } \Big |\mathcal{B} }\right)}}{{\left( n+1\right) }} ,
	\end{aligned}
	\label{eq:capture second initial result}
	\end{equation}
	where $ {{\mathcal {L}}_{{{\mathcal {I}}_{\mathrm{ intra}}}}}(\cdot) $ denotes the Laplace Transform of the aggregate intra-cell interference $ {{\mathcal {I}}_{\mathrm{ intra}}} $. The Laplace Transform of the aggregate intra-cell interference $ {{\mathcal {I}}_{\mathrm{ intra}}} $ is given in \cite{Jiang2018} as
	\begin{equation}
	{{\mathcal {L}}_{{{\mathcal {I}}_{\mathrm{ intra}}}}}\left({\frac {{\gamma _{\mathrm{ th}}} }{\rho } \Big |\mathcal{B} }\right) ={1}/{(1+\gamma _{th})^{n}}.
	\label{eq:Lap_intra interference}
	\end{equation}
	
	Substituting the Laplace Transform of the aggregated intra-cell interference \eqref{eq:Lap_intra interference} into \eqref{eq:capture second initial result}, we can obtain \eqref{eq:capture final result}

	\ifCLASSOPTIONcaptionsoff
	\newpage
	\fi

	\bibliographystyle{IEEEtran}
	
	\bibliography{IEEEabrv,manuscript}

\end{document}